\tikzstyle{block} = [rounded corners, inner sep=5pt, very thick, rectangle, draw, font=\fontsize{9}{9}\selectfont]
\tikzstyle{specialise} = [very thick, {-Stealth[length=2mm, width=2mm]}]
\newtheorem{theorem}{Theorem}
\newtheorem{definition}{Definition}
\newtheorem{example}{Example}
\newcommand{\rank}[2]{\ensuremath{r^{\scriptscriptstyle c}_{#2}(#1)}}
\newcommand{\rankp}[2]{\ensuremath{r^{\scriptscriptstyle f}_{#2}(#1)}}
\newcommand{\rankT}[2]{\ensuremath{r^{\scriptscriptstyle d}_{#2}(#1)}}
\newcommand{\rankpT}[2]{\ensuremath{r^{\scriptscriptstyle h}_{#2}(#1)}}
\newcommand{\SIJ}[2]{\ensuremath{F^{\scriptscriptstyle \leq}_{#2}(#1)}}
\newcommand{\TJI}[2]{\ensuremath{C^{\scriptscriptstyle \leq}_{#2}(#1)}}
\newcommand{\SIJT}[2]{\ensuremath{H^{\scriptscriptstyle \leq}_{#2}(#1)}}
\newcommand{\TJIT}[2]{\ensuremath{D^{\scriptscriptstyle \leq}_{#2}(#1)}}
\newcommand{\GI}[1]{\ensuremath{g^{\scriptscriptstyle c}{(#1)}}}
\newcommand{\GPJ}[1]{\ensuremath{g^{\scriptscriptstyle f}{(#1)}}}
\newcommand{\GIT}[1]{\ensuremath{g^{\scriptscriptstyle d}{(#1)}}}
\newcommand{\GPJT}[1]{\ensuremath{g^{\scriptscriptstyle h}{(#1)}}}
\newcommand{\FKI}[2]{\ensuremath{F^{\scriptscriptstyle =}_{#1}(#2)}}
\newcommand{\CKJ}[2]{\ensuremath{C^{\scriptscriptstyle =}_{#1}(#2)}}
\newcommand{\HKI}[2]{\ensuremath{H^{\scriptscriptstyle =}_{#1}(#2)}}
\newcommand{\DKJ}[2]{\ensuremath{D^{\scriptscriptstyle =}_{#1}(#2)}}
\newcommand{\YIK}[2]{\ensuremath{y^{\scriptscriptstyle c}_{#1 #2}}}
\newcommand{\YPJK}[2]{\ensuremath{y^{\scriptscriptstyle f}_{#1 #2}}}
\newcommand{\YIKT}[2]{\ensuremath{y^{\scriptscriptstyle d}_{#1 #2}}}
\newcommand{\YPJKT}[2]{\ensuremath{y^{\scriptscriptstyle h}_{#1 #2}}}
\def\rev#1{{#1}}
\begin{document}

\pagestyle{plain}

\title{\bf Mathematical models for stable matching problems with ties and incomplete lists}

\author{Maxence Delorme$^{(1)}$, Sergio Garc\'{i}a$^{(1)}$, Jacek Gondzio$^{(1)}$, \\ Joerg Kalcsics$^{(1)}$,
David Manlove$^{(2)}$, William Pettersson$^{(2)}$}

\affil{
\small $(1)$ School of Mathematics, University of Edinburgh, United Kingdom\\
\small $(2)$ School of Computing Science, University of Glasgow, United Kingdom\\
\vspace*{1ex}
{\small \em Corresponding author \texttt{maxence.delorme@ed.ac.uk}, phone +44 0131 650 5870}}
\date{ }
\maketitle
\noindent
\vspace*{-5ex}
\begin{abstract}

We present new integer linear programming (ILP) models for $\mathcal{NP}$-hard optimisation problems in instances of the Stable Marriage problem with Ties and Incomplete lists (SMTI) and its many-to-one generalisation, the Hospitals / Residents problem with Ties (HRT).
These models can be used to efficiently solve these optimisation problems when applied to (i) instances derived from real-world applications, and (ii) larger instances that are randomly-generated.
In the case of SMTI, we consider instances arising from the pairing of children with adoptive families, where preferences are obtained from a quality measure of each possible pairing of child to family.
In this case we seek a maximum weight stable matching.
We present new algorithms for preprocessing instances of SMTI with ties on both sides, as well as new ILP models.
Algorithms based on existing state-of-the-art models only solve~6~of our~22~real-world instances within an hour per instance, and our new models \rev{incorporating dummy variables and constraint merging, together with preprocessing and a warm start}, solve all~22~instances within a mean runtime of a minute.
For HRT, we consider instances derived from the problem of assigning junior doctors to foundation posts in Scottish hospitals.
Here we seek a maximum size stable matching.
\rev{We show how to extend our models for SMTI to HRT and reduce the average running time for real-world HRT instances
by two orders of magnitude.}
We also show that our models outperform \rev{by a wide margin all known} state-of-the-art models on larger randomly-generated instances of SMTI and HRT.
\end{abstract}

\noindent
{\bf Keywords:} Assignment, Stable Marriage problem, Hospitals / Residents problem, Ties and Incomplete lists, Exact algorithms.

\section{Introduction}\label{Prob}
\subsection{Background}

In a stable matching problem, we are given a set of agents, each of whom ranks all the others in strict order of preference, indicating their level of desire to be matched to each other.
A~solution of the problem is a pairing of all agents such that no two agents form a {\em blocking pair}, i.e., a pair that are not currently matched together, but would prefer to be matched to each other rather than to their currently assigned partners.

Without any other constraints, this problem is known as the Stable Roommates (SR) problem \cite{GS62,GI89}, and the objective is to partition the $n$ agents into $n/2$ pairs (e.g., doubles in a tennis tournament) such that no blocking pair exists.

The Stable Marriage problem (SM) is a \rev{bipartite restriction} of SR, where the agents are split into equal-sized sets of men and women, and it is assumed that men only find women acceptable and vice versa.
This problem was first introduced by Gale and Shapley \cite{GS62}, who also gave a linear-time algorithm for finding a stable matching.

It is not always desirable, or even possible, to have every agent express a preference over all other agents.
In the Stable Marriage problem with Incomplete lists (SMI), agents can identify potential partners as being unacceptable, meaning that they would rather be unmatched than matched to such agents, and a slight modification of the Gale-Shapley algorithm will find a stable matching in linear time \cite[Section 1.4.2]{GI89}.
It turns out that all stable matchings in a given instance of SMI have the same size \cite{GS85}.

\rev{In many applications it is not realistic to expect that agents have sufficient information to enable them to rank their acceptable potential partners in strict order of preference.
In reality, preference lists may include ties, where a tie indicates a set of agents that are equally desirable.
This gives rise to another variant of SM known as the Stable Marriage problem with Ties (SMT) \cite{Irv94}.
It is known that resolving indifference by employing tie-breaking is not a good strategy, since it over-constrains the problem \cite{EE08}.
Instead, three levels of stability \cite{Irv94} have been defined in the SMT case, where ties are retained, that vary according to whether agents will agree to swap between choices they find equally acceptable.}
Under the weakest of these three definitions, which we assume in this paper, a stable matching can always be found by arbitrarily breaking the ties, resulting in an instance of SM.

If both ties and incomplete lists are introduced we obtain the Stable Marriage problem with Ties and Incomplete lists, or SMTI \cite{MIIMM02}.
In an instance of SMTI, stable matchings do not necessarily have the same size, and MAX-SMTI, the problem of finding a stable matching of maximum size, is ${\cal NP}$-hard \cite{MIIMM02}.

The Stable Roommates problem with Globally Ranked Pairs (SR-GRP) \cite{ALMO08,ABEOMP09} is a variant of the Stable Roommates problem involving ties and incomplete lists in which each pair of compatible agents $\{p,q\}$ has a weight $w(\{p,q\})$ assigned to their potential pairing, and the preference lists of each agent can be derived from these weights in the obvious manner: given two compatible pairs $\{p,q\}$ and $\{p,r\}$, $p$ prefers $q$ to $r$ if and only if $w(\{p,q\})>w(\{p,r\})$.
This problem can be restricted to give the Stable Marriage problem with Ties, Incomplete lists, and Globally Ranked Pairs (SMTI-GRP) by splitting the agents into two sets as per the Stable Marriage problem.

In this work, we study one application of SMTI-GRP involving the pairing of children with adoptive families as coordinated by the British charity Coram\footnote{Coram | Better chances for children since 1739, \url{https://www.coram.org.uk}}.
Social workers \rev{determine} a weight to \rev{be assigned to} each child--family pair $(c,f)$, as a predicted measure of the suitability of placing~$c$ with~$f$, giving an instance of SMTI-GRP.
\rev{Currently Coram is using a clearing house system which pairs children and families at suitable specified intervals.
Similar to the case for kidney exchange programmes \cite{RSU04}, this allows for a more efficient pairing of children and families, at the cost of a slightly increased delay between entering the system and being paired.}
\rev{In such a system Coram has decided that} the goal \rev{should be} to find a stable matching that pairs as many children as possible and/or has maximum overall weight\footnote{The child--family pairings in a computed stable matching are treated merely as suggestions that will be investigated further by social workers for suitability before any actual assignments are made.}.
Moreover, Coram would like to ensure that the computed matching is viable in the long term.
To this end, a lower bound, or threshold, on suitable weights is used to create refined instances of SMTI-GRP where child--family pairs with weights below the threshold are not allowed to be matched together.
\rev{However, attempts to determine appropriate threshold values, as well as good weighting functions and suitable intervals between matching runs, have been hampered by the lack of tractable algorithms for finding maximum weight stable matchings for such instances.}
\rev{Indeed, in} the SMTI-GRP setting, ${\cal NP}$-hardness holds for each of the problems of finding a maximum size stable matching \cite{ALMO08} and a maximum weight stable matching~\cite{DMS17}.

Whilst SMTI is a one-to-one matching problem, in some applications one set of agents can be matched with more than one partner.
The Hospitals / Residents (HR) problem \cite{GS62,Man15} is a many-to-one extension of SMI that models the assignment of intending junior doctors (residents) to hospitals.
Each doctor is to be assigned to at most one hospital, whilst each hospital may be assigned multiple doctors up to some given capacity.
HR can be generalised to include ties in the preference lists, leading to the Hospitals / Residents with Ties (HRT), the many-to-one generalisation of SMTI.
HRT has many applications: it models, for example, the assignment of medical graduates to Scottish hospitals as part of the Scottish Foundation Allocation Scheme (SFAS), which ran between 1999 and 2012.
Since then, the UK has amalgamated all such schemes into the UK Foundation Programme, which handles the assignment of almost 8000~doctors to approximately 7000 positions across 20 Foundation Schools, each of which consists of multiple hospitals \cite{UKFP}.
In this setting a key aim is to find a stable matching of maximum size, which is an ${\cal NP}$-hard problem in view of the ${\cal NP}$-hardness of MAX-SMTI.

\rev{An overview of the differences between problems discussed in the paper is given in Table~\ref{table:matching_types}.
The relationships between these problems are demonstrated in Figure~\ref{diagram:relationships}.
In the diagram, an arrow from problem A to problem B indicates that problem B is a special case of problem A.
For example, SMTI-SYM is the special case of SMTI-GRP in which preferences are symmetric.}

\begin{table}[hbt]
\begin{center}
\caption{\rev{Summary of matching problems}}\label{table:matching_types}
\begin{tabular}{@{}cccccc@{}}
\toprule
Variant & Bipartite & Incompatible pairs & Ties & Weights & Capacity\\
\cmidrule(r){1-1} \cmidrule(r){2-2} \cmidrule(r){3-3} \cmidrule(r){4-4} \cmidrule(r){5-5} \cmidrule(r){6-6}
SR & No & No & No & No & No\\
SR-GRP & No & Yes & Yes & Yes & No \\
SM & Yes & No & No & No & No \\
SMI & Yes & Yes & No & No & No \\
SMT & Yes & No & Yes & No & No \\
SMTI & Yes & Yes & Yes & No & No \\
SMTI-GRP & Yes & Yes & Yes & Yes & No \\
SMTI-SYM & Yes & Yes & Yes & Yes & No \\
HRT & Yes & Yes & Yes & No & Yes \\
\bottomrule
\end{tabular}
\end{center}
\end{table}

\begin{figure}[hbt]
\begin{center}
\caption{\rev{Relationships between matching problems}}\label{diagram:relationships}
\begin{tikzpicture}
\node[block] (sm) at (0,0) {SM};
\node[block, left=1cm of sm] (sr) {SR};
\node[block, above right=1cm and 1cm of sm] (smi) {SMI};
\node[block, below right=1cm and 1cm of sm] (smt) {SMT};
\node[block, right=3cm of sm] (smti) {SMTI};
\node[block, right=1cm of smti] (hrt) {HRT};
\node[block, below=1cm of smti] (smti-grp) {SMTI-GRP};
\node[block, below=1cm of sr] (sr-grp) {SR-GRP};
\node[block, right=1cm of smti-grp] (smti-sym) {SMTI-SYM};
\draw[specialise] (sr) -- (sm);
\draw[specialise] (smi) -- (sm);
\draw[specialise] (smt) -- (sm);
\draw[specialise] (smti) -- (smi);
\draw[specialise] (smti) -- (smt);
\draw[specialise] (hrt) -- (smti);
\draw[specialise] (smti) -- (smti-grp);
\draw[specialise] (sr) -- (sr-grp);
\draw[specialise] (smti-grp) -- (smti-sym);
\end{tikzpicture}
\end{center}
\end{figure}
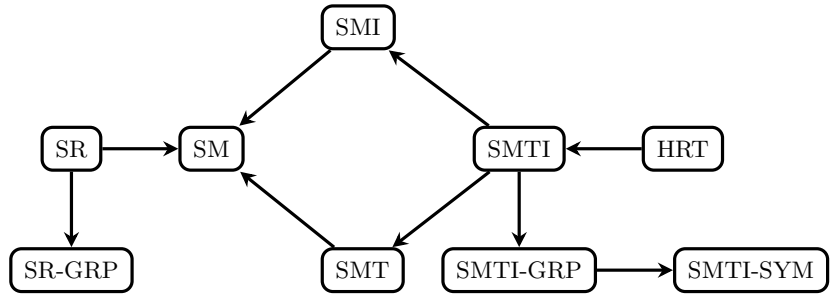
\subsection{Our contribution}

In this paper we have developed several new techniques that improve the performance of ILP models for instances of both SMTI and HRT.
Our first contribution is to present two algorithms for preprocessing instances of SMTI with ties on both sides.
Without such preprocessing, only 6~of~22~real-world instances from Coram could be solved within an hour per instance using state-of-the-art models from the literature.
Our new preprocessing significantly improves this, finding solutions to 21 of the 22 instances in an average of 434~seconds.
We also present new ILP models for SMTI and HRT.
These use dummy variables to reduce the number of non-zero entries in their corresponding constraint matrices, which vastly increases the sparsity of the constraint matrix at the cost of additional variables.
Further, we formulate different sets of constraints to model stability, including the use of redundant constraints to improve the continuous relaxations of our models.
We test each of these individually, and these improvements together allow us to find solutions to all real-world instances in a mean runtime of less than 60 seconds.
Turning to randomly-generated instances, the new models also solve all 30 random instances of SMTI that we generated with $\num{50000}$ agents on either side and preference lists of length 5 on one side, while existing state-of-the-art models could only solve 20.
We extend our new ILP models to HRT, where we show a reduction in the mean runtime on existing real-world instances of HRT from SFAS, decreasing the average runtime from 144 seconds to only~3~seconds.
We also generate~90~\rev{random} instances that mimic the UK Foundation Programme (with about 7500~doctors and positions).
Existing models solve 66 of these, while our new models solve 81.

\subsection{Related work}\label{Liter}
MAX-SMTI is known to be ${\cal NP}$-hard even if each tie occurs at the end of some agent's preference list, ties occur on one side only and each tie is of length two \cite{MIIMM02}.
The special case of MAX-SMTI that asks if an instance of SMTI has a stable matching that matches every man and woman is also ${\cal NP}$-complete~\cite{MIIMM02}, and this result holds even when preference lists have lengths of at most~3 and ties occur on one side only~\cite{IMO09}.

MAX-SMTI is also not approximable within a factor of 21/19~\cite{HIMY07} unless $\cal P=\cal NP$, even if preferences on one side are strictly ordered, and on the other side are either strictly ordered or a tie of length two.
The best currently-known performance guarantee is 3/2, achieved first in non-linear running time~\cite{McD09} and later improved to linear time~\cite{Kir13,Pal14}, although better guarantees can be achieved in certain restrictions~\cite{IM16}.
Kir\'aly \cite{Kir13}  shows how to extend his 3/2-approximation algorithm for MAX-SMTI to HRT.

The Stable Marriage problem with Ties, Incomplete lists and Symmetric preferences (SMTI-SYM) is a restriction of SMTI-GRP such that (i) for each man--woman pair $(u,v)$, the rank of $v$ in $u$'s list, i.e., the integer $k$ such that $v$ belongs to the $k$th tie in $u$'s list, is equal to the rank of $u$ in $v$'s list, and (ii) the weight of $(u,v)$ is precisely this integer $k$.  Finding a maximum size stable matching in an instance of SMTI-SYM is ${\cal NP}$-hard, and therefore the same result holds for SMTI-GRP~\cite{ALMO08}.
Given an instance of SMTI-GRP, if the goal is to find a matching that maximises the total weight rather than the total size, this problem is ${\cal NP}$-hard also~\cite{DMS17}.

Linear programming models for SM and SMI have been long studied, and stable matchings correspond exactly to extreme points of the solution polytopes of such models~\cite{GI89,VV89}.
These formulations have been extended to give ILP models for finding maximum size stable matchings in instances of SMTI and HRT~\cite{Kwa15,KM14}.
ILP models have also been given for a common extension of HR that allows doctors to apply as couples, typically so that both members can be matched to hospitals that are geographically close~\cite{ABM16,DPB15,Hin15,MMT17,McB15}.
Other techniques in the field include constraint programming, which has been applied to SM and its variants~\cite{GIMPS01,GP02b,MOPU05,OMa07}, and the use of SAT models and SAT solvers~\cite{DPB15,GP02b}.

Diebold and Bichler \cite{DB17} performed a thorough experimental study of eight algorithms for~HRT, giving a comparison of these algorithms when applied to real-world HRT instances derived from a course allocation system at the Technical University of Munich.
These datasets ranged in size from 18-733 students (the ``doctors'') and 3-43 courses (the ``hospitals'').
The authors measured a number of attributes of the algorithms, including the sizes of the computed stable matchings.
The \rev{methods} that they considered included three exact algorithms for MAX-HRT based on the ILP model presented in \cite{KM14}.

\rev{Slaugh et al.\ \cite{SAKU16} described improvements they had made to the mechanism for matching children to adoptive families as utilized by the Pennsylvania Adoption Exchange.
The process is semi-decentralized in that up to ten match attempts are made against families when each child arrives.
By contrast, the more centralized process adopted by Coram involves a pool of children and families building up over time, leading to the use of a matching algorithm for the resulting two-sided matching problem.}

For more details on the diverse variants of stable matching problems, we direct the reader to \cite{Man13} and for an economic overview of these problems we recommend \cite{RS90}.

\subsection{Layout of the paper}
The rest of the paper is organised as follows.
Section \ref{Prob2} defines the problems that are studied in this paper, and we introduce and discuss existing models for these in Section \ref{Form}.
This is followed by a theorem and two algorithms for preprocessing instances of SMTI \rev{in order to reduce instance sizes,} in Section \ref{Pre}.
Section \ref{Dum} introduces our first new model, which reduces the number of non-zero elements in the constraint matrix through dummy variables.
Further models are presented in Section \ref{Sta} with new stability constraints.
We demonstrate our new models and improvements experimentally in Section \ref{Comp} and we provide some conclusions in Section \ref{Con}.

\section{Problem definitions} \label{Prob2}
In this section we give formal definitions of the three key problems that we consider in this paper.
\subsection{Stable Marriage with Ties and Incomplete Lists}
\label{sec:SMTIdef}
An instance $I$ of the \emph{Stable Marriage problem with Ties and Incomplete lists} (\emph{SMTI}) comprises a set $C$ of $n_1$ children and a set $F$ of $n_2$ families, where each child (respectively family) ranks a subset of the families (respectively children) in order of preference, possibly with ties.
We say that a child $c\in C$ finds a family $f\in F$ \emph{acceptable} if $f$ belongs to $c$'s preference list, and we define acceptability for a family in a similar way.
We assume that preference lists are \emph{consistent}, that is, given a child--family pair $(c,f)\in C\times F$, $c$ finds $f$ acceptable if and only if $f$ finds $c$ acceptable. If $c$ does find $f$ acceptable then we call $(c,f)$ an \emph{acceptable pair}.

A \emph{matching} $M$ in $I$ is a subset of acceptable pairs such that, for each agent $a\in C\cup F$, $a$ appears in at most one pair in $M$.
If $a$ appears in a pair of $M$, we say that $a$ is \emph{matched}, otherwise $a$ is \emph{unmatched}.
In the former case, $M(a)$ denotes $a$'s \emph{partner} in $M$, that is, if $(c,f)\in M$, then $M(c)=f$ and $M(f)=c$.
We now define stability, which is the key condition that must be satisfied by a matching in $I$.
\begin{definition}
\label{def:stab_smti}
Let $I$ be an instance of SMTI and let $M$ be a matching in $I$.
A child-family pair $(c,f)\in (C\times F)\backslash M$ is a \emph{blocking pair} of $M$, or \emph{blocks} $M$, if
\begin{enumerate}
\item $(c,f)$ is an acceptable pair,
\item either $c$ is unmatched in $M$ or $c$ prefers $f$ to $M(c)$, and
\item either $f$ is unmatched in $M$ or $f$ prefers $c$ to $M(f)$.
\end{enumerate}
$M$ is said to be \emph{stable} if it admits no blocking pair.
\end{definition}
In SMTI, the goal is to find an arbitrary stable matching.
We denote the problem of finding a maximum size stable matching, given an instance of SMTI, by MAX-SMTI.

\subsection{Globally Ranked Pairs}
\label{sec:smti-grp}
An instance $I$ of the \emph{Stable Marriage problem with Ties, Incomplete lists and Globally-Ranked Pairs} (\emph{SMTI-GRP}) comprises a set $C$ of $n_1$ children, a set $F$ of $n_2$, a subset $X\subseteq C\times F$ of \emph{acceptable} child--family pairs, and a weight function $w: X \longrightarrow \mathbb R$.

The set of acceptable pairs and the weight function are used to define the SMTI instance~$I'$ corresponding to $I$ as follows: for any two acceptable pairs $(c,f)\in X$ and $(c,f')\in X$, $c$ prefers $f$ to $f'$ if $w(c,f) > w(c,f')$, and $c$ is indifferent between $f$ and $f'$ if $w(c,f) = w(c,f')$.
Preference lists of families are constructed in a similar fashion.
A stable matching in $I$ can then be defined by applying Definition \ref{def:stab_smti} to $I'$.

Given a matching $M$ in $I$, the \emph{weight} of $M$, denoted by $w(M)$, is defined to be $\sum_{(c,f)\in M} w(c,f)$.
The problem of finding a stable matching of maximum size is called MAX-SMTI-GRP, and the problem of finding a stable matching of maximum weight is called MAX-WT-SMTI-GRP.

Given an instance $I$ of MAX-WT-SMTI-GRP, we can construct a refined instance $I'$ of MAX-WT-SMTI-GRP from $I$ by setting a threshold value $t$ with the effect that the acceptable pairs in $I'$ are precisely the acceptable pairs in $I$ which have weight at least $t$.
The effect of imposing different threshold values on $I$ is of interest to Coram.

\begin{example}
Our first example demonstrates how different threshold values create instances of SMTI-GRP with differently sized maximum size stable matchings.
Let $C=\{c_1,c_2,c_3\}$ be a set of children, $F=\{f_1,f_2,f_3\}$ be a set of families, and let the weight function $w$ be defined by the following table:
\begin{center}
\begin{tabular}{|c|c|c|c|}\hline
      & $f_1$ & $f_2$ & $f_3$ \\ \hline
$c_1$ & 95 & 85 & 80 \\ \hline
$c_2$ & 95 & 80 & 80 \\ \hline
$c_3$ & 80 & 45 & 75 \\ \hline
\end{tabular}
\end{center}
\vspace*{2ex}
By taking $t=0$ we obtain an instance of SMTI-GRP in which all pairs are acceptable. In this instance, $M_1=\{(c_1,f_2),(c_2,f_1),(c_3,f_3)\}$ is the unique maximum weight stable matching and its weight is 255.
However, if we take $t=80$ and construct an instance of SMTI-GRP, then the only acceptable pair that involves $c_3$ is $(c_3,f_1)$ and no stable matching can involve $c_3$.
The unique maximum weight stable matching is then $M_2=\{(c_1,f_2),(c_2,f_1)\}$, which has a weight of~180.
\end{example}
\begin{example}
Our second instance of SMTI-GRP is intended to show that a maximum weight stable matching may be smaller in size than a maximum size stable matching.
Let $C=\{c_1,c_2,c_3,c_4\}$, $F=\{f_1,f_2,f_3,f_4\}$, \[X=\{(c_i,f_i) : 1\leq i\leq 4\}\cup \{(c_{i+1},f_i) : 1\leq i\leq 3\},\] and the weight function $w$ be given by the following table:
\begin{center}
\begin{tabular}{|c|c|c|c|c|}\hline
      & $f_1$ & $f_2$ & $f_3$ & $f_4$ \\ \hline
$c_1$ & 1 & -- & -- & -- \\ \hline
$c_2$ & 4 & 4 & -- & -- \\ \hline
$c_3$ & -- & 3 & 4 & -- \\ \hline
$c_4$ & -- & -- & 4 & 1 \\ \hline
\end{tabular}
\end{center}
Let $M_1=\{(c_i,f_i) : 1\leq i\leq 4\}$ and $M_2=\{(c_{i+1},f_i) : 1\leq i\leq 3\}$.
It is easy to verify that $M_1$ and $M_2$ are both stable matchings.
However $w(M_1)=10$ and $w(M_2)=11$, whereas $|M_1|=4$ and $|M_2|=3$.
\end{example}

\subsection{Hospitals / Residents with Ties}
An instance $I$ of the \emph{Hospitals / Residents problem with Ties} (\emph{HRT}) comprises a set $D$ of $n_1$~resident doctors and a set $H$ of $n_2$ hospitals.
Each doctor (respectively hospital) ranks a subset of the hospitals (respectively doctors) in order of preference, possibly with ties.
Additionally, each hospital $h$ has a \emph{capacity} $c_h\in \mathbb Z^+$, meaning that $h$ can be assigned at most $c_h$ doctors, while each doctor is assigned to at most one hospital.
The definitions of the terms \emph{consistent} and \emph{acceptable} are analogous to the SMTI case.

A {\em matching} $M$ in $I$ is a subset of acceptable pairs such that each doctor appears in at most one pair, and each hospital $h\in H$ appears in at most $c_h$ pairs.
Given a doctor $d\in D$, the terms \emph{matched} and \emph{unmatched}, and the notation $M(d)$, are defined as in the SMTI case.
Given a hospital $h\in H$, we let $M(h)=\{d\in D : (d,h)\in M\}$.
We say that $h$ is \emph{full} or \emph{undersubscribed} in $M$ if $|M(h)|=c_h$ or $|M(h)|<c_h$, respectively.
We next define stability by extending Definition \ref{def:stab_smti} to the HRT case.
\begin{definition}
\label{def:stab_hrt}
Let $I$ be an instance of HRT and let $M$ be a matching in $I$.
A doctor--hospital pair $(d,h)\in (D \times H)\setminus M$ is a \emph{blocking pair} of $M$, or \emph{blocks} $M$, if
\begin{enumerate}
\item $(d, h)$ is an acceptable pair,
\item either $d$ is unmatched in $M$ or $d$ prefers $h$ to $M(d)$, and
\item either $h$ is undersubscribed in $M$ or $h$ prefers $d$ to some member of $M(h)$.
\end{enumerate}
$M$ is said to be \emph{stable} if it admits no blocking pair.
\end{definition}

As in the SMTI case, the problem of finding a maximum size stable matching, given an instance of HRT, is denoted MAX-HRT.

While the definition for HRT does allow an arbitrary number of preferences to be expressed by any doctor, in reality doctors' preference lists are often short: for example in the SFAS application until 2009, every doctor's list was of length 6.

\section{Existing formulations}\label{Form}
The first mathematical models for SM were proposed in the late 1980s independently by Gusfield and Irving \cite{GI89} and by Vande Vate \cite{VV89}.
Rothblum \cite{Rot92} extended Vande Vate's model to SMI.
In the following, we show how to extend Rothblum's model to formulate both MAX-SMTI and MAX-HRT, as was done previously by Kwanashie and Manlove \cite{KM14}.
These existing models for MAX-SMTI and MAX-HRT are described here as they will be extended in later sections.

\subsection{Mathematical model for MAX-SMTI}
Based on our Coram application, we will adopt the terminology from that context when presenting models for MAX-SMTI.
When reasoning about models, we will use $i$ and $j$ to represent a child and family, rather than $c$ and $f$, respectively, as $i$ and $j$ are by convention more typically used as subscript variables.
Let us consider the following notation:
\begin{itemize}
\item $F(i)$ is the set of families acceptable for child $i$ $(i=1,\dots,n_1)$.
\item $C(j)$ is the set of children acceptable for family $j$ $(j=1,\dots,n_2)$.
\item $\rank{i}{j}$ is the rank of family $j$ for child $i$, defined as the integer $k$ such that $j$ belongs to the $k$th most-preferred tie in $i$'s list $(i=1,\dots,n_1,\ j \in F(i))$.
    The smaller the value of~$\rank{i}{j}$, the better family $j$ is ranked for child $i$.
\item $\rankp{j}{i}$ is the rank of child $i$ for family $j$, defined as the integer $k$ such that $i$ belongs to the $k$th most-preferred tie in $j$'s list $(j=1,\dots,n_2,\ i \in C(j))$. The smaller the value of~$\rankp{j}{i}$, the better child $i$ is ranked for family $j$.
\item $\SIJ{i}{j}$ is the set of families that child $i$ ranks at the same level or better than family~$j$, that is, $\SIJ{i}{j}=\{j'\in F : \rank{i}{j'} \leq \rank{i}{j}\}$ $(i=1,\dots,n_1,\ j \in F(i))$.
\item $\TJI{j}{i}$ is the set of children that family $j$ ranks at the same level or better than child~$i$, that is, $\TJI{j}{i}=\{i'\in C : \rankp{j}{i'} \leq \rankp{j}{i}\}$ $(j=1,\dots,n_2,\ i \in C(j))$.
\end{itemize}
    By introducing binary decision variables $x_{ij}$ that take value $1$ if child~$i$ is matched with family~$j$, and $0$ otherwise $(i=1,\dots,n_1,\ \rev{j \in F(i)})$, MAX-SMTI can be modelled as follows:
\begin{alignat}{3}
&\max \quad  && \sum_{i = 1}^{n_1} \sum_{j \in F(i)} x_{ij} \label{eq:C1} \\
&\mbox{~s.t.}  && \sum_{j \in F(i)} x_{ij} \leq 1, \quad && i = 1,\dots,n_1, \label{eq:C2} \\
& && \sum_{i \in C(j)} x_{ij} \leq 1, \quad && j = 1,\dots,n_2, \label{eq:C3} \\
& && 1-\sum_{q \in \SIJ{i}{j}} x_{iq} \leq \sum_{p \in \TJI{j}{i}} x_{pj}, \quad && i = 1,\dots,n_1,\ j \in F(i), \label{eq:C4} \\
& && x_{ij} \in \{0,1\}, \quad&& i = 1,\dots,n_1,\ j \in F(i) \label{eq:C5} .
\end{alignat}

The objective function \eqref{eq:C1} maximises the number of children assigned.
If instead, one wants to maximise the score of the children assigned (as in MAX-WT-SMTI-GRP), it is enough to use $\sum_{i = 1}^{n_1} \sum_{j \in F(i)} w_{ij}x_{ij}$ in the objective function.
Constraints \eqref{eq:C2} ensure that each child is matched with at most one family and constraints \eqref{eq:C3} impose that each family is matched with at most one child.
Finally, constraints \eqref{eq:C4} enforce stability by ruling out the existence of any blocking pair.
More specifically, they ensure that if child $i$ is not matched with family $j$ or any other family they rank at the same level or better than $j$ (i.e., $\sum_{q \in \SIJ{i}{j}} x_{iq} = 0$), then family $j$ is matched with a child it ranks at the same level or better than $i$ (i.e., $\sum_{p \in \TJI{j}{i}} x_{pj} \geq 1$).

\subsection{Mathematical model for MAX-HRT}
An adaptation of model \eqref{eq:C1}-\eqref{eq:C5} for MAX-HRT was proposed in \cite{KM14}.
It uses the same notation that was used for MAX-SMTI except that:
\begin{itemize}
\item The term ``family'' is replaced by ``hospital'' and $F(i)$, $\rankp{j}{i}$, and  \SIJ{i}{j} are changed into~$H(i)$, $\rankpT{j}{i}$, and $\SIJT{i}{j}$, respectively.
\item The term  ``child'' is replaced by ``doctor'' and  $C(j)$, $\rank{i}{j}$, and $\TJI{j}{i}$ are changed to $D(j)$, $\rankT{i}{j}$, and $\TJIT{j}{i}$, respectively.
\item The capacity of hospital $j$ $(j=1,\dots,n_2)$ is denoted by $c_j$.
\end{itemize}

By introducing binary decision variables $x_{ij}$ that take value $1$ if doctor $i$ is assigned to~hospital $j$, and $0$ otherwise $(i=1,\dots,n_1,\, \rev{j \in H(i)})$, MAX-HRT can be modelled as follows:
\begin{alignat}{3}
&\max \quad  && \sum_{i = 1}^{n_1} \sum_{j \in H(i)} x_{ij} \label{eq:mod1} \\
&\mbox{~s.t.}  && \sum_{j \in H(i)} x_{ij} \leq 1, \quad && i = 1,\dots,n_1, \label{eq:mod2} \\
& && \sum_{i \in D(j)} x_{ij} \leq c_j, \quad && j = 1,\dots,n_2, \label{eq:mod3} \\
& && c_j\left(1-\sum_{q \in \SIJT{i}{j}} x_{iq}\right) \leq \sum_{p \in \TJIT{j}{i}} x_{pj}, \quad && i = 1,\dots,n_1,\ j \in H(i), \label{eq:mod4} \\
& && x_{ij} \in \{0,1\}, \quad&& i = 1,\dots,n_1,\ j \in H(i) \label{eq:mod5} .
\end{alignat}

While the meaning of the objective function and constraints \eqref{eq:mod2} remains the same, constraints \eqref{eq:mod3} ensure now that each hospital does not exceed its capacity. Constraints~\eqref{eq:mod4} are the adaptation of the stability constraints \eqref{eq:C4} when capacity is considered.
More specifically, they ensure that if doctor $i$ was not assigned to hospital $j$ or any other hospital they rank at the same level or higher than $j$ (i.e., $\sum_{q \in \SIJT{i}{j}} x_{iq} = 0$), then hospital$~j$ has filled its capacity with doctors it ranks at the same level or higher than $i$ (i.e., $\sum_{p \in \TJIT{j}{i}} x_{pj} \geq c_j$).

\subsection{Discussion on the models}
Although the model for SM was proposed almost thirty years ago, the computational behaviour of its extension to MAX-SMTI and MAX-WT-SMTI-GRP (i.e., in one-to-one instances specifically) has never been studied, to the best of our knowledge.
However, we mention that our direct implementation of \eqref{eq:C1}-\eqref{eq:C5} on real-world MAX-WT-SMTI-GRP instances involving $500$ children, $\num{1000}$ families, and a large list of preferences cannot be solved by state-of-the-art solvers within hours.
Indeed, the model becomes too difficult as it requires up to $\num{500000}$ stability constraints, each of them including $|\SIJ{i}{j}| + |\TJI{j}{i}|$ nonzero elements (i.e., up to $\num{1500}$).

Regarding MAX-HRT, computational experiments with \eqref{eq:mod1}-\eqref{eq:mod5} applied to real-world and randomly generated instances have been carried out previously \cite{DB17,Kwa15,KM14}.  Kwanashie \cite{Kwa15} observed a significant increase in terms of average running time when the number of doctors goes above 400.  As our objective is to solve instances of the magnitude of the UK Foundation Programme application (involving almost $\num{8000}$ doctors and $500$ hospitals), the model in its current form is not suitable.

In the next sections, we introduce various techniques aimed at reducing the size of the two models and strengthening their continuous relaxation.


\section{Preprocessing SMTI with ties on both sides}\label{Pre}

It is quite common in combinatorial optimisation to use some simple analysis to fix the optimal value of a subset of variables and, thus, reduce the problem size.
\rev{This is particularly useful  for stable matching problems as one variable, one stability constraint, and up to $n_1 + n_2$ non-zero elements are associated with each acceptable pair.}
Two procedures, ``Hospitals-offer'' and ``Residents-apply', have been proposed for removing acceptable pairs that cannot be part of any stable matching for HRT when ties only occur in hospitals' preference lists \cite{IM09}.

When ties can belong to the preference lists of both sets of agents, a reduction technique is known for the special case of SMTI in which preference lists on one side are of length at most two \cite{IMO09}.
However the aforementioned preprocessing algorithms are not applicable to arbitrary instances of SMTI.
In this section we introduce a new sufficient condition to find a set of acceptable pairs that cannot be part of any stable matching for SMTI.
We then propose two greedy algorithms to detect such pairs which can then be removed from the instance without affecting any stable matching.
Our technique is based on the following result.
\begin{theorem}\label{thm:preprocessing}
Let $I$ be an instance of SMTI.
Given a child $i$ and a set of families ${\cal F}$ such that for every family $j\in\mathcal{F}$, $(i,j)$ is an acceptable pair, let ${\cal C}$ be the set of children that at least one family in ${\cal F}$ ranks at the same level or better than $i$, i.e., $\mathcal{C} = \bigcup_{j\in\mathcal{F}} \{\TJI{j}{i}\}$. If $|{\cal F}| \geq |{\cal C}|$, then in any stable matching $M$, child~$i$ will be matched with a family $j'$ such that $\rank{i}{j'} \leq \max_{j \in {\cal F}}\{\rank{i}{j}\}$.
\end{theorem}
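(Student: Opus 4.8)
The plan is to argue by contradiction, using a counting argument of Hall-type flavour applied to the children in $\mathcal{C}$ and the families in $\mathcal{F}$. Suppose $M$ is a stable matching in which child $i$ is matched to some family $j'$ with $\rank{i}{j'} > \max_{j\in\mathcal{F}}\{\rank{i}{j}\}$ (this case also covers $i$ being unmatched, which we can treat as $i$ having an infinitely bad partner, or handle separately). The key point is that for every $j\in\mathcal{F}$, child $i$ strictly prefers $j$ to $M(i)$, since $\rank{i}{j}\le \max_{j''\in\mathcal{F}}\{\rank{i}{j''}\} < \rank{i}{j'}$. So condition~2 of Definition~\ref{def:stab_smti} is satisfied for the pair $(i,j)$, for every $j\in\mathcal{F}$.

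Since $M$ is stable, none of these pairs $(i,j)$ with $j\in\mathcal{F}$ can block $M$, so condition~3 must fail for each: every $j\in\mathcal{F}$ must be matched in $M$ to some child $p_j = M(j)$ that $j$ ranks at the same level or better than $i$, i.e.\ $p_j \in \TJI{j}{i} \subseteq \mathcal{C}$. This defines a map $\phi:\mathcal{F}\to\mathcal{C}$ by $\phi(j)=M(j)$. The crucial observation is that $\phi$ is injective: if $\phi(j_1)=\phi(j_2)=p$ for $j_1\neq j_2$, then $p$ would be matched to both $j_1$ and $j_2$ in $M$, contradicting that $M$ is a matching (each agent appears in at most one pair). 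Injectivity of $\phi$ forces $|\mathcal{F}|\le|\mathcal{C}|$.

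Now I would need to reconcile this with the hypothesis $|\mathcal{F}|\ge|\mathcal{C}|$. The two together give $|\mathcal{F}|=|\mathcal{C}|$, so $\phi$ is in fact a bijection from $\mathcal{F}$ onto $\mathcal{C}$; in particular every child in $\mathcal{C}$ is matched in $M$, and $i\notin\mathcal{C}$ (otherwise $i$ would be matched to some family in $\mathcal{F}$, contradicting our assumption that $\rank{i}{M(i)}$ exceeds the $\mathcal{F}$-threshold; note $i\in\TJI{j}{i}$ always, so one must check whether $i$ itself can lie in $\mathcal{C}$). Here is the subtlety I expect to be the main obstacle: if $i\in\mathcal{C}$, the injectivity argument needs $i$ to be matched to some $j\in\mathcal{F}$, but then $\rank{i}{M(i)}\le\max_{j\in\mathcal{F}}\rank{i}{j}$, directly contradicting the assumption — so actually the case $i\in\mathcal{C}$ is immediately fine. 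The genuinely delicate point is ensuring the argument is airtight when $i$ is \emph{unmatched} in $M$: then condition~2 holds vacuously for every $j\in\mathcal{F}$, the same injection $\phi$ exists, and the same contradiction $|\mathcal{F}|\le|\mathcal{C}|$ is obtained but this time with no room for equality to "save" us — it still contradicts nothing unless we have strict inequality. Re-examining: when $|\mathcal{F}|=|\mathcal{C}|$ and $\phi$ is a bijection, $i$ unmatched means $i\notin\mathrm{image}(\phi)=\mathcal{C}$, yet $i\in\TJI{j}{i}$ for every $j\in\mathcal{F}$, and $\mathcal{F}\neq\emptyset$ (as $|\mathcal{F}|\ge|\mathcal{C}|\ge 0$ and if $\mathcal{F}=\emptyset$ the statement is trivial), so $i\in\mathcal{C}$, a contradiction. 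This closes the unmatched case. The matched case with $\rank{i}{M(i)}>\max_{j\in\mathcal{F}}\rank{i}{j}$ likewise yields $i\in\mathcal{C}$ (same reasoning) while the bijection says $i\notin\mathcal{C}$ — contradiction. Hence no such $M$ exists, and in every stable matching $i$ is matched to some $j'$ with $\rank{i}{j'}\le\max_{j\in\mathcal{F}}\rank{i}{j}$.

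I would write this up by first disposing of the trivial case $\mathcal{F}=\emptyset$, then combining the matched-with-bad-partner and unmatched cases into a single "$i$ is not matched to any member of $\mathcal{F}$, hence strictly prefers every member of $\mathcal{F}$ to its situation in $M$" statement, building $\phi$, invoking injectivity to get $|\mathcal{F}|\le|\mathcal{C}|$, and deriving the contradiction with $i\in\mathcal{C}$ via $i\in\TJI{j}{i}$. The only thing requiring care is the bookkeeping around whether $i$ itself is one of the children being "used up" by the injection; everything else is a direct application of Definition~\ref{def:stab_smti} and the pigeonhole principle.
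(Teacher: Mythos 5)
Your proof is correct and is essentially the paper's argument in dual form: the paper uses the pigeonhole count $|\mathcal{F}|\geq|\mathcal{C}|>|\mathcal{C}\setminus\{i\}|$ to exhibit directly a family $j''\in\mathcal{F}$ matched outside $\mathcal{C}$ or unmatched, which then blocks $M$ with $i$, whereas you package the same count as an injection $\mathcal{F}\to\mathcal{C}$ forced by stability and derive the contradiction from $i\in\mathcal{C}$ being in its image. The mathematical content (stability forces each $j\in\mathcal{F}$ to be matched within $\TJI{j}{i}$, plus the observation that $i$ itself lies in $\mathcal{C}$ but is unavailable) is identical, so this counts as the same approach.
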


\begin{proof}
Assume for a contradiction that $M$ is a stable matching in $I$ in which child $i$ is matched with a family $j'$ with $\rank{i}{j'} > \max_{j \in \mathcal{F}} \rank{i}{j}$ or is unmatched.  Since $|\mathcal{F}| \geq |\mathcal{C}| > |\mathcal{C} \setminus \{i\}|$, at least one family $j''\in \mathcal{F}$ must be matched with some child~$i' \not\in \mathcal{C}$ or be unmatched.   Then either $i$ is unmatched or prefers $j''$ to $j'$, and either $j''$ is unmatched or prefers $i$ to $i'$.
In all cases $(i,j'')$ blocks $M$, which is a contradiction.
\end{proof}

There is no obvious efficient way to find, for each child, the set $\mathcal{F}$ that removes the largest number of acceptable pairs from an instance of SMTI.
Instead we present two polynomial-time algorithms to find sets that allow a significant number of acceptable pairs to be removed.
Algorithm \ref{alg-sfr}, ``first-rank-family'', considers the first rank of children for each family $j$, i.e.,~the children that $j$ thinks are the most desirable.
Algorithm \ref{alg-ecp}, ``full-child-preferences'', completely analyses the preference lists of the children to find reductions.
Note that each of these algorithms can also be applied to the preferences of the other set of agents by symmetry to obtain ``first-rank-child" and ``full-family-preferences", and that they may each be applied iteratively until no more reductions are possible.

\begin{algorithm}[hbt]
\caption{{\tt first-rank-family}}\label{alg-sfr}
\begin{algorithmic}[1]
\State \textbf{Input:} An instance of SMTI with children $C$ and families $F$
\State \textbf{Output:} A set $\mathcal{R}$ containing pairs $(i,j')$ that cannot be part of any stable matching
\ForEach{${\cal C} \in \mathcal{P}(C)$}  \Comment{for each subset of children in the powerset ${\cal P}(C)$}
\State ${\cal M}_{{\cal C}} \leftarrow \emptyset $
\EndForEach
\State $\mathcal{R} \leftarrow \emptyset$
\ForEach{family $j \in F$}
\State ${\cal C} \leftarrow \{i\in C(j) : \rankp{j}{i}=1\}$ \Comment{the set of children family $j$ considers equally best}
\State ${\cal M}_{{\cal C}} \leftarrow {\cal M}_{{\cal C}} \cup \{j\}$
\EndForEach
\ForEach{${\cal C} \in \mathcal{P}(C)$}
\State $\cal{F} \leftarrow \mathcal{M}_{{\cal C}}$
\If{$|{\cal F}| \geq |{\cal C}|$}
\ForEach{$i \in {\cal C}$}
\ForEach{$j' \in F(i)$ with $\rank{i}{j'} > \max_{j \in {\cal F}}\{\rank{i}{j}\}$}
\State $\mathcal{R} \leftarrow \mathcal{R} \cup \{(i,j')\}$
\EndForEach
\EndForEach
\EndIf
\EndForEach
\State \textbf{return }{$\mathcal{R}$}
\end{algorithmic}
\end{algorithm}

After initialisation (lines 3--6), Algorithm \ref{alg-sfr} considers each family $j$ in turn, determining the set of children $\mathcal{C}$ that family $j$ ranks as (equally) most desirable (line 8) and storing this fact (line 9).
Once this has been recorded, the algorithm searches through all these stored sets (line~11) to find sets of children ${\cal C}$ and the set of families ${\cal F}$ which all consider the set ${\cal C}$ as their (equally) best choice.
If the set of families ${\cal F}$ is at least as big as the set of children ${\cal C}$ (line 13) then for each child $i \in {\cal C}$ and each family $j' \in F(i)$ ranked worse than the worst family in ${\cal F}$, we add the pair $(i, j')$ to our reduction set $\cal R$ (lines~14--16).

As written, Algorithm~\ref{alg-sfr} requires $\mathcal{O}(2^{n_1}n_1 n_2)$ operations, as we must iterate over each possible subset of children (in both lines 3 and 11).
However, if we only explicitly store the subsets~$\cal C$ and $\cal M_{\cal C}$ generated by lines 7-10, we will obtain at most $n_2$ subsets $\cal C$ and at most $n_2$~subsets~$\cal M_{\cal C}$.
To only store these specific subsets, we need to quickly look up whether such a set $\mathcal{C}$ exists, and create it if it does not, before adding a family~$j$ to $\mathcal{M}_\mathcal{C}$.
A hashmap is a suitable data structure for carrying out these operations, and will reduce the overall complexity to $\mathcal{O}(n_1 n_2^2)$.

Algorithm \ref{alg-ecp} incrementally builds up the sets $\mathcal{F}$ and $\mathcal{C}$ for each child $i$.
To build $\mathcal{F}$, we simply add each family $j$ from the preference list of $i$ in order from most preferable to least (lines 6--7), considering agents within ties in increasing indicial order.
At each step, when we have added $j$, we then add to $\mathcal{C}$ all children that $j$ finds at least as preferable as $i$ (line 8).
By construction these satisfy Theorem \ref{thm:preprocessing}.
Thus, if $\mathcal{F}$ is large enough compared to $\mathcal{C}$, we add to our reduction all the pairs $(i, j')$ where  $j' \in F(i)$ are the families ranked worse than the worst family in ${\cal F}$ (lines~9--11).
Algorithm~\ref{alg-ecp} requires $\mathcal{O}(n_1n_2(n_1+n_2))$ steps as the outer (respectively middle and inner) \textbf{for each} loop is executed $\mathcal{O}(n_1)$ (respectively $\mathcal{O}(n_2)$ and $\mathcal{O}(n_2)$) times, and line 8 requires $O(n_1)$ time.

\begin{algorithm}[bht]
\caption{{\tt full-child-preferences}}\label{alg-ecp}
\begin{algorithmic}[1]
\State \textbf{Input:} An instance of SMTI with children $C$ and families $F$
\State \textbf{Output:} A set $\mathcal{R}$ containing pairs $(i,j')$ that cannot be part of any stable matching
\ForEach{child $i \in C$}
\State ${\cal F}\leftarrow {\emptyset}$
\State ${\cal C}\leftarrow {\emptyset}$
\ForEach{$j \in F(i)$} \Comment{for each family in descending order of preference}
\State ${\cal F} \leftarrow {\cal F} \cup \{j\}$
\State ${\cal C} \leftarrow {\cal C} \cup \{i' \in C(j) : \rankp{j}{i'} \leq \rankp{j}{i}\}$
\If{$|{\cal F}| \geq |{\cal C}|$}
\ForEach{$j' \in {F(i)}$ with $\rank{i}{j'} > \max_{j \in {\cal F}}\{\rank{i}{j}\}$}
\State $\mathcal{R} \leftarrow \mathcal{R} \cup \{(i,j')\}$
\EndForEach
\State \textbf{break}
\EndIf
\EndForEach
\EndForEach
\State \textbf{return} ${\cal R}$
\end{algorithmic}
\end{algorithm}

\rev{We note that: (i) this preprocessing is more powerful when the number of ranks (i.e., groups of tied elements) is high and when there are only a few agents in each rank, and (ii) rather than adding families in descending order of preference, more sophisticated heuristics could find a larger number of reductions at the cost of a higher time complexity.
However, it is worth mentioning that our greedy approach works particularly well when there is a strong correlation between the scores obtained by a given agent among the other agents, e.g., if a family is ranked first for a given child, it also tends to be ranked highly by other children, which is the case in our application.
We show in Section \ref{Comp} that the greedy approaches given by Algorithms \ref{alg-sfr} and~\ref{alg-ecp}} can significantly reduce running times for our SMTI-GRP instances.
We also remark that we did not try to extend Algorithms 1 and 2 to HRT instances with ties on both sides, as our practical application involving SFAS instances allows ties on one side only, and in such a setting we may apply Algorithms ``Hospitals-offer'' and ``Residents-apply'' from \cite{IM09}.

We conclude this section with an example of the application of Algorithms \ref{alg-sfr} and~\ref{alg-ecp}.
\begin{example}
Let us consider an SMTI instance with 5 families and 4 children with the following preference lists:
\begin{alignat*}{8}
c_1: & \quad (f_1~f_2~f_3) & \quad f_4& \quad \quad \quad  & f_1: & \quad (c_1~c_3) && \quad c_4 \\
c_2: & \quad (f_2~f_3~f_4) & \quad f_5 &\quad \quad \quad & f_2: & \quad (c_1~c_2) && \quad c_4 \\
c_3: & \quad (f_1~f_3~f_4) & &\quad \quad \quad & f_3: & \quad (c_2~c_3) && \quad c_1 \\
c_4: & \quad (f_1~f_2~f_4) & &\quad \quad \quad & f_4: & \quad (c_1~c_2) && \quad (c_3~c_4) \\
&&&& f_5: & \quad c_2.
\end{alignat*}

\noindent In this example, child 1 prefers to be matched equally with family 1, 2, and 3.
If his first choice is not granted, then child 1 prefers to be matched with family 4.

We start by running ``first-rank-child", but we see that no two children share the same common set of families as their first preference, so no acceptable pair is removed.
We then run ``first-rank-family'', which highlights that both $f_2$ and $f_4$ have the same pair of children as their equally-first choice ($c_1$ and $c_2$).
This tells us that children $c_1$ and $c_2$ will never be matched with a family that they prefer less than both $f_2$ and $f_4$.
Therefore, there is no need for $c_2$ to ever consider $f_5$.
This leaves the following preferences.
\begin{alignat*}{8}
c_1: & \quad (f_1~f_2~f_3) & \quad f_4& \quad \quad \quad  & f_1: & \quad (c_1~c_3) && \quad c_4 \\
c_2: & \quad (f_2~f_3~f_4) &&\quad \quad \quad & f_2: & \quad (c_1~c_2) && \quad c_4 \\
c_3: & \quad (f_1~f_3~f_4) & &\quad \quad \quad & f_3: & \quad (c_2~c_3) && \quad c_1 \\
c_4: & \quad (f_1~f_2~f_4) & &\quad \quad \quad & f_4: & \quad (c_1~c_2) && \quad (c_3~c_4).
\end{alignat*}

As the instance was reduced, we could now re-run ``first-rank-child'' to see if any further reductions are to be found.
However, no more reductions will be found, and so we move on to ``full-child-preferences'' and ``full-family-preferences''.
We demonstrate the former on child $c_1$ to obtain the following sequence of sets $\mathcal{F}$ and $\mathcal{C}$:
\begin{alignat*}{8}
&{\cal F} = \{f_1\}  &&\quad {\cal C} = \{c_1, c_3\} \\
&{\cal F} = \{f_1, f_2\}  && \quad  {\cal C} = \{c_1, c_2, c_3\} \\
&{\cal F} = \{f_1, f_2, f_3\} &&  \quad  {\cal C} = \{c_1, c_2, c_3\}.
\end{alignat*}
As $|{\cal F}| \geq |{\cal C}|$, we know that $c_1$ cannot be matched with a family that $c_1$ would rank as worse than the worst family in~${\cal F}$.
This means that $c_1$ will never consider $f_4$, so the acceptable pair~$(c_1,f_4)$ can be removed, leaving the following reduced instance.
\begin{alignat*}{8}
c_1: & \quad (f_1~f_2~f_3) && \quad \quad \quad  & f_1: & \quad (c_1~c_3) && \quad c_4 \\
c_2: & \quad (f_2~f_3~f_4) &&\quad \quad \quad & f_2: & \quad (c_1~c_2) && \quad c_4 \\
c_3: & \quad (f_1~f_3~f_4) & &\quad \quad \quad & f_3: & \quad (c_2~c_3) && \quad c_1 \\
c_4: & \quad (f_1~f_2~f_4) & &\quad \quad \quad & f_4: & \quad c_2 && \quad (c_3~c_4).
\end{alignat*}

Since we did reduce the instance, it is possible that re-running one of the other algorithms might reduce the instance even further, but in this particular instance no more reductions can be found.
\end{example}

\section{Reducing the number of non-zero elements}\label{Dum}

Even if the reduction procedures previously described remove a significant number of acceptable pairs, the models involved in real-world instances remain too large to be solved by state-of-the-art ILP solvers. \rev{There are $O(n_1 n_2)$ constraints and variables and up to $O(n_1 n_2 (n_1 + n_2))$ non-zero elements, depending on the length of the agents' preference lists.}
In this section, we propose an alternative formulation for MAX-SMTI that uses dummy variables to keep track of the children's and families' assignments at each rank \rev{so that the overall number of non-zero elements is reduced}.
Let us consider the following additional notation:
\begin{itemize}
\item $\GI{i}$ is the number of distinct ranks (or ties) for child $i$ $(i=1,\dots,n_1)$.
\item $\GPJ{j}$ is the number of distinct ranks for family $j$ $(j=1,\dots,n_2)$.
\item $\FKI{k}{i}$ is the set of families acceptable for child $i$ $(i=1,\dots,n_1)$ with rank $k$ $(k=1,\dots,\GI{i})$.
\item $\CKJ{k}{j}$ is the set of children acceptable for family $j$ $(j=1,\dots,n_2)$ with rank $k$ $(k=1,\dots,\GPJ{j})$.
\end{itemize}

In addition, we introduce the dummy binary decision variables $\YIK{i}{k}$ (respectively, $\YPJK{j}{k}$) that take value $1$ if child $i$ (respectively, family $j$) is matched with a family (respectively, a child) of rank at most $k$, and $0$ otherwise $(i=1,\dots,n_1,\ k=1,\dots,\GI{i})$ (respectively, $j=1,\dots,n_2,\ k=1,\dots,\GPJ{j})$.
Variables $\YIK{i}{k}$ and $\YPJK{j}{k}$ can be seen as a replacement of the summations of $x_{iq}$ and $x_{pj}$ over the sets $\SIJ{i}{j}$ and~$\TJI{j}{i}$.
These variables have certain similarities with the cut-off scores for the college admission problem \cite{ABM16} and the radius formulation for the $p$-median problem~\cite{GLM11}.

The new formulation for MAX-SMTI is:
\begin{alignat}{3}
&\max \quad  && \sum_{i = 1}^{n_1}  \YIK{i}{,\GI{i}} \label{eq:CHF1} \\
&\mbox{~s.t.} && \sum_{j \in \FKI{1}{i}} x_{ij} = \YIK{i}{1}, \quad && i = 1,\dots,n_1, \label{eq:CHF2} \\
& && \sum_{j \in \FKI{k}{i}} x_{ij} + \YIK{i}{, k-1} = \YIK{i}{k}, \quad && i = 1,\dots,n_1,\ k=2,\dots,\GI{i}, \label{eq:CHF3} \\
& && \sum_{i \in \CKJ{1}{j}} x_{ij} = \YPJK{j}{1}, \quad && j = 1,\dots,n_2, \label{eq:CHF4} \\
& && \sum_{i \in \CKJ{k}{j}} x_{ij} + \YPJK{j}{,k-1} = \YPJK{j}{k}, \quad && j = 1,\dots,n_2,\ k=2,\dots,\GPJ{j}, \label{eq:CHF5} \\
& && 1 - \YIK{i}{k} \leq \YPJK{j}{,\rankp{j}{i}}, \quad && i = 1,\dots,n_1,\ k = 1,\dots,\GI{i},\ j \in \FKI{k}{i}, \label{eq:CHF6}\\
& && x_{i,j} \in \{0,1\}, \quad&& i = 1,\dots,n_1,\ j \in F(i), \label{eq:CHF7}\\
& && \YIK{i}{k} \in \{0,1\}, \quad&& i = 1,\dots,n_1,\ k = 1,\dots,\GI{i}, \label{eq:CHF8} \\
& && \YPJK{j}{k} \in \{0,1\}, \quad&& j = 1,\dots,n_2,\ k = 1,\dots,\GPJ{j} \label{eq:CHF9}.
\end{alignat}

The objective function \eqref{eq:CHF1} now uses the last $\YIK{i}{k}$ variable for each child (i.e., the one associated with its last rank) as an indicator of whether the child is assigned to a family.
First, we note that even if \eqref{eq:CHF1} uses fewer non-zero elements than \eqref{eq:C1}, both objective functions are equivalent.
Second, the version of~\eqref{eq:C1} that considers the weight of each pair should be used to solve MAX-WT-SMTI-GRP as \eqref{eq:CHF1} cannot be adapted for the problem.
Constraints \eqref{eq:CHF2}-\eqref{eq:CHF5} maintain the coherence of variables $\YIK{i}{k}$ and $\YPJK{j}{k}$.
Constraints~\eqref{eq:CHF6} ensure the stability of the matching by using the new variables: if child $i$ was not matched with a family of rank $k$ or better (i.e., $1 - \YIK{i}{k} = 1)$, that means that all families that child $i$ ranks at level $k$ were already matched with a child of better or equal rank (i.e., $\YPJK{j}{,\rank{j}{i}} \geq 1 \ \forall j \in \FKI{k}{i}$). Finally, by imposing binary values, constraints \eqref{eq:CHF8}-\eqref{eq:CHF9} prevent any child or family from being matched more than once.
Note that the model would also be valid if variables \YIK{i}{k} and \YPJK{j}{k} were defined as continuous.
However, preliminary experiments showed that it was not beneficial to do so.

Model \eqref{eq:CHF1}-\eqref{eq:CHF9} requires $O(\sum_{i=1}^{n_1} \GI{i} + \sum_{j=1}^{n_2} \GPJ{j})$ additional variables.
It still uses $O(n_1 n_2)$ stability constraints, but they now involve only two variables, which reduces the overall size of the model.

By adopting similar notation for MAX-HRT, where $\GIT{i}$ is the number of ranks (or ties) for doctor $i$ $(i=1,\dots,n_1)$, $\GPJT{j}$ is the number of ranks (ties) for hospital $j$ $(j=1,\dots,n_2)$, $\YIKT{i}{k}$ is a binary decision variable that takes the value $1$ if and only if doctor $i$ is assigned to a hospital of rank at most $k$, and $\YPJKT{j}{k}$ is an integer decision variables indicating how many doctors of rank at most $k$ are assigned to hospital $j$, MAX-HRT becomes:
\begin{alignat}{3}
&\max \quad  && \sum_{i = 1}^{n_1}  \YIKT{i}{,\GIT{i}} \label{eq:modF1} \\
&\mbox{~s.t.} && \YPJKT{j}{\GPJT{j}} \leq c_j, \quad&& j = 1,\dots,n_2, \label{eq:modF6}\\
& && c_j (1 - \YIKT{i}{k}) \leq \YPJKT{j}{,\rankpT{j}{i}}, \quad && i = 1,\dots,n_1,\ k = 1,\dots,\GIT{i},\ j \in \HKI{k}{i}, \label{eq:modF7}\\
& && \YPJKT{j}{k} \in\mathbb{Z}^+, \quad&& j = 1,\dots,n_2,\ k = 1,\dots,\GPJT{j} \label{eq:modF10}, \\
& && (\ref{eq:CHF2}), (\ref{eq:CHF3}), (\ref{eq:CHF4}), (\ref{eq:CHF5}), (\ref{eq:CHF7}), (\ref{eq:CHF8}), \nonumber
\end{alignat}
where (\ref{eq:CHF2})-(\ref{eq:CHF5}) and (\ref{eq:CHF7})-(\ref{eq:CHF8}) are appropriately modified to follow HRT notation.

\section{Alternative stability constraints}\label{Sta}

While dummy variables reduce the number of non-zero elements involved in the stability constraints, we introduce in this section some additional techniques that influence the number of stability constraints and the quality of the continuous relaxations of the models.
\rev{It is well-known that the performance of an integer model depends not only on its size, but also on its linear relaxation.
It was shown in the literature that for several problems (see, e.g., the Bin Packing Problem \cite{DIM16} or the Resource-Constrained Project Scheduling Problem \cite{KALM11}), it may be beneficial to use larger models if they have a better continuous relaxation (i.e., closer to the optimal solution).}

\subsection{Reduced stability constraints for MAX-SMTI}\label{StaS}

\subsubsection{Constraint merging} Model \eqref{eq:CHF1}-\eqref{eq:CHF9} can be further reduced by merging, for a given child, all stability constraints with the same rank. Constraints \eqref{eq:CHF6} now become
\begin{alignat}{5}
& && |\FKI{k}{i}| (1 - \YIK{i}{k}) \leq \sum_{j \in \FKI{k}{i}} \YPJK{j}{,\rankp{j}{i}}, \quad && i = 1,\dots,n_1,\ k = 1,\dots,\GI{i}. \label{eq:CHFP6}
\end{alignat}
This transformation reduces the size of the model, as it uses only $O(\sum_{i=1}^{n_1} \GI{i})$ stability constraints.
However, as will be shown in the computational experiments section, it also leads to a deterioration of the continuous relaxation bound.
We note that the reduction in terms of size with respect to model \eqref{eq:CHF1}-\eqref{eq:CHF9} is more significant when the number of ranks (i.e.,~tie groups) is low.

\subsubsection{Double stability constraints} To compensate for the degradation of the continuous relaxation caused by the constraint merging, it is possible to use the additional stability constraints
\begin{alignat}{5}
& && |\CKJ{k}{j}| (1 -  \YPJK{j}{k}) \leq \sum_{i \in \CKJ{k}{j}} \YIK{i}{,\rank{i}{j}}, \quad && j = 1,\dots,n_2,\ k = 1,\dots,\GPJ{j}. \label{eq:CHFPP6}
\end{alignat}
These constraints can be seen as the counterparts of \eqref{eq:CHFP6} when the merging is performed on the families instead of the children.
These additional constraints improve the quality of the continuous relaxation with respect to the model that uses only \eqref{eq:CHFP6}.
Overall, we observe a tradeoff between the number of stability constraints used in the model and the quality of the bound obtained by the continuous relaxation.

\subsection{New stability constraints for MAX-HRT}\label{StaH}

For MAX-HRT, merging constraint \eqref{eq:modF7} is not useful if there are no ties on the doctors' side (i.e., if $|\HKI{k}{i}| = 1$, $i = 1,\dots,n_1,\ k = 1,\dots,\GIT{i})$.
As our practical case allows ties on the hospitals' side only, it is not an improvement we explored.
In this section, we propose instead an enriched formulation for MAX-HRT that allows us to define a second set of stability constraints.
We introduce new binary decision variables $z_{jk}$ that take value $1$ if hospital $j$ has filled entirely its capacity with doctors of rank at most $k - 1$, and $0$ otherwise $(j=1,\dots,n_2,\ k=1,\dots,\GPJT{j}+1)$.
An additional set of stability constraints for MAX-HRT is:
\begin{alignat}{5}
& & x_{ij} &\leq 1 - z_{j,\rankpT{j}{i}}, \quad&& i = 1,\dots,n_1,\ j \in H(i), \label{eq:S1}\\
& & z_{jk} &\geq z_{j,k-1}, \quad&& j = 1,\dots,n_2,\ k=2,\dots,\GPJT{j}+1, \label{eq:S2}\\
& & 1 - z_{jk} &\leq \YIKT{i}{,\rankT{i}{j}}, \quad&& j = 1,\dots,n_2,\ k=2,\dots,\GPJT{j}+1,\ i \in \DKJ{k-1}{j}, \label{eq:S3}\\
& & c_j z_{j,\GPJT{j}+1} &\leq \YPJKT{j}{,\GPJT{j}}, \quad && j = 1,\dots,n_2, \label{eq:S4} \\
& & z_{jk} & \in \{0,1\}, \quad && j = 1,\dots,n_2,\ k=1,\dots,\GPJT{j}+1 \label{eq:S5}.
\end{alignat}

Constraints \eqref{eq:S1} ensure that a doctor can only be assigned to a hospital that is not already filled by doctors that the hospital strictly prefers.
Constraints \eqref{eq:S2} ensure that, if a hospital is full for doctors of rank at most $k-1$, then it is also full for doctors of rank at most $k$.
Constraints \eqref{eq:S3} ensure that the matching is stable by ruling out the existence of any blocking pair.
More specifically, \eqref{eq:S3} ensure that, if a hospital $j$ has space for doctors of rank $k$ (i.e., $z_{jk} = 0$), then all doctors $i$ of the hospital with rank $k-1$ were already accepted in $j$ or in a hospital they consider equal or better than $j$ (i.e., $\YIKT{i}{,\rankT{i}{j}} = 1$).
Finally, constraints \eqref{eq:S4} ensure that if the hospital is full ($z_{j,\GPJT{j}+1} = 1$), then it has $c_j$ doctors assigned to it ($\YPJKT{j}{,\GPJT{j}} \geq c_j$).

As ties occur on the hospital side, constraint merging can be applied to \eqref{eq:S3} to obtain:
\begin{alignat}{5}
& & |\DKJ{k-1}{j}| (1 - z_{jk}) &\leq \sum_{i \in \DKJ{k-1}{j}} \YIKT{i}{,\rankT{i}{j}}, \quad&& j = 1,\dots,n_2,\ k=2,\dots,\GPJT{j}+1. \label{eq:SP3}
\end{alignat}

Note that both sets of stability constraints \eqref{eq:modF7} and \eqref{eq:S1}-\eqref{eq:S5} can be used at the same time. Moreover, stability constraints
\begin{alignat}{5}
& & c_j z_{jk} \leq \YPJKT{j}{,k-1}, \quad&& j = 1,\dots,n_2,\ k=2,\dots,\GPJT{j}+1, \label{eq:mix1}
\end{alignat}
stating that if a given hospital $j$ has no room for doctors at rank $k$ (i.e., $z_{jk}=1$), then it has already selected $c_j$ doctors of rank at most $k-1$ (i.e., $\YPJKT{j}{,k-1}\geq c_j$), could be used to replace~\eqref{eq:modF7}.
Indeed, let us consider a doctor $i$, a hospital~$j$, and their respective ranks $\rankT{i}{j} = m$ and $\rankpT{j}{i} = p$.
By \eqref{eq:S3}, we know that
\begin{alignat}{5}
1 - z_{j,p+1} \leq \YIKT{i}{m},
\end{alignat}
which can be rewritten as
\begin{alignat}{5}
c_j (1 - \YIKT{i}{m}) \leq c_j z_{j,p+1} \label{eq:mix3}.
\end{alignat}
This can be completed by \eqref{eq:mix1} to obtain
\begin{alignat}{5}
c_j (1 - \YIKT{i}{m}) \leq c_j z_{j,p+1} \leq \YPJKT{j}{p},
\end{alignat}
which leads to constraints \eqref{eq:modF7} being redundant.
Notice that while we count $O(n_1 n_2)$ stability constraints \eqref{eq:modF7}, only $O(\sum_{j=1}^{n_2} \GPJT{j})$ are required for \eqref{eq:mix1}.

\section{Computational experiments}\label{Comp}

We report in this section the outcome of extensive computational experiments aimed at~testing the effectiveness of the proposed improvements for MAX-WT-SMTI-GRP, MAX-SMTI, and MAX-HRT.
All algorithms were coded in {\tt C++}, and Gurobi 7.5.2 was used to solve the ILP models.
The implemented software is downloadable from the online repository \url{https://dx.doi.org/10.5281/zenodo.2538150}.
The experiments were run on an Intel Xeon E5-2687W v3, 3.10GHz with 512GB of memory, running under Linux 4.13.0.
Each instance was run using a single core and had a total time limit (comprising model creation time and solution time) of 3600~seconds per problem instance.
The instances that were randomly generated are downloadable from the online repository \url{https://dx.doi.org/10.5525/gla.researchdata.664}.

For each problem, a first set of experiments determined what combination of the improvements proposed in Sections \ref{Dum} and \ref{Sta} is the most effective.
At a second stage, we reran a subset of these combinations to evaluate the impact of other features (such as preprocessing, branching priorities, and warm start).
\rev{Experimental evaluations of the algorithms for MAX-WT-SMTI-GRP, MAX-SMTI, and MAX-HRT are presented in Sections \ref{s1}, \ref{s2}, and \ref{s3} respectively.
A~summary of the experiments reported in this section is presented in Table \ref{table-0}.
For each experiment, the table identifies the problem solved, the subsequent tables containing the results, the dataset used (number of instances and source), and the experimental objectives.
We show that dummy variables are particularly useful for MAX-WT-SMTI-GRP, constraint merging is beneficial for MAX-SMTI, and the new set of stability constraints substantially improves the performance of MAX-HRT. A more detailed discussion of these results can be found in Section~\ref{s4}}

\begin{table}[hbt]
	\scriptsize
	\begin{center}
    	\caption{\rev{Summary of the experiments}} \label{table-0}
	\setlength{\tabcolsep}{0.2cm}
	\begin{tabular}{lrrllllrrrrrrrrrrrr}
     	\toprule
        \multicolumn{1}{c}{\multirow{2}{*}{Problem}} &  \multicolumn{1}{c}{\multirow{2}{*}{Table}} & \multicolumn{2}{c}{\multirow{1}{*}{Dataset}} & \multicolumn{1}{c}{\multirow{2}{*}{Purpose}} \\
       \cmidrule(r){3-4}
         & &  \multicolumn{1}{c}{\#inst.} & \multicolumn{1}{c}{Source} & \\
        \cmidrule(r){1-1} \cmidrule(r){2-2} \cmidrule(r){3-4} \cmidrule(r){5-5}
\multirow{3}{*}{MAX-WT-SMTI-GRP} & 3 & 22 & Coram & impact of dummy var. and alternative stability cons. \\
& 4 & 22 & Coram & impact of preprocessing, warm start, and priorities \\
& 5 & 220 & randomly generated & models' limits \\
\cmidrule(r){1-1} \cmidrule(r){2-2} \cmidrule(r){3-4} \cmidrule(r){5-5}
MAX-SMTI & 6--8 & 270 & randomly generated & impact of dummy var. and alternative stability cons. \\
\cmidrule(r){1-1} \cmidrule(r){2-2} \cmidrule(r){3-4} \cmidrule(r){5-5}
\multirow{4}{*}{MAX-HRT non master} & 9 & 3 & SFAS & impact of dummy var. and alternative stability cons. \\
  & 10 & 700 & randomly generated & impact of dummy var. and alternative stability cons. \\
  & 11 & 150 & randomly generated & models' limits \\
  & 12 & 60 & randomly generated & impact of preprocessing, warm start, and priorities\\
\cmidrule(r){1-1} \cmidrule(r){2-2} \cmidrule(r){3-4} \cmidrule(r){5-5}
MAX-HRT master & 13 & 450 & randomly generated & difference between the initial and the best model  \\
 \bottomrule
\end{tabular}
\end{center}
\end{table}
\subsection{MAX-WT-SMTI-GRP} \label{s1}

\subsubsection{Real-world instances}

We were provided with a sample set of data representing 550 children and 894 families, which included a weight (determined by Coram) for every child-family pair.
Since most of the weights vary between 80 and 100, for each integer threshold in $\{0, 80, 81, \ldots,100\}$, we created an instance of SMTI-GRP as described in Section \ref{Prob2}, resulting in 22 instances.
When the threshold is set to 0, all child-family pairs are acceptable, and so technically we have complete rather than incomplete preference lists.
Our models are still applicable to such instances.

The sample dataset contained a significant number of ties.
One way to measure the density of ties in an SMTI or HRT instance is now described.
Given an instance of SMI (i.e.,~with no ties), an instance of SMTI can be created as follows.
For each set of agents pick a {\em tie density} $t_d$ with $0 \leq t_d \leq 1$ (i.e.,~so the children and families may have distinct tie densities).
Then let any two consecutive elements in any preference list from this set of agents be tied with probability $t_d$ ($0\leq t_d \leq 1$)~\cite{KM14}.
We reverse this procedure here, taking a real-world instance and calculating what proportion of ties exist on each side of it using the following process.
First, count the number of distinct tie groups $g$ and the number of actual elements $e$ in the preference lists on one side, and let $n$ be the number of agents on that side that have at least one agent in their preference list.
The tie density of that side is then given by $t_d = 1-\frac{g-n}{e-n}$.
Subtracting $n$ from the numerator ensures that if the agents consider all possibilities equally, we obtain a tie density of~1, and subtracting $n$ from the denominator ensures that an absence of ties equates to a density of 0.
Note that $e > n$ is assumed, that is, at least one agent has more than two agents in its preference list.
If $e = n$, the instance is trivial to solve.

Through this formula, the tie density is 0.9716 for the families and 0.9705 for the children.
Among the problem instances generated in~\cite{KM14}, it was found that those with a tie density $t_d \approx 0.85$ tended to be the most challenging to solve.

For a given child or family, the variance of weights is relatively low: 3.8 on average for the children and 3.7 for the families.
This suggests that some children are considered ``good" or ``bad" matches for many families, and vice-versa.

Each of the almost half a million child-family pairs is given a weight, but there are only 54 distinct weights in total.
A bar chart displaying these weights is shown in Figure \ref{Fig-1} where we observe two distinct behaviours: some values appear many times (we will call them ``common'') and some values appear just a few times in comparison (we will call them ``uncommon").

\begin{figure}[hbt]
\centering
\caption{Bar chart of the child-family weights for the complete instance}
\includegraphics[scale=0.425]{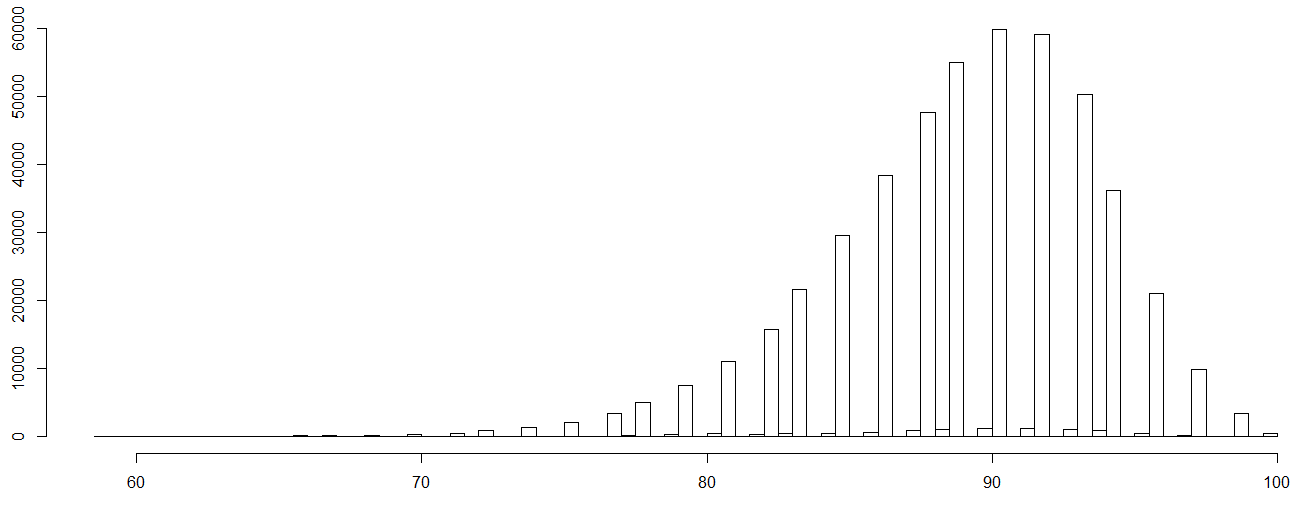}
\label{Fig-1}
\end{figure}
We examine our improvements to the models on the 22 SMTI-GRP real-world instances generated from the sample data set with the different thresholds.
In all methods we considered, the preprocessing described in Section \ref{Pre} was applied.
Table \ref{table-1} compares the six possible combinations of the proposed improvements.
The ``Method'' columns detail the combination of options, with some attributes describing the specific implementation: ``index'' identifies the method while ``dummy variables'', ``stability constraint merging'', and ``double stability constraints'' indicate the inclusion or otherwise of the corresponding feature in the model.
The three following columns give some indicators of the performance of each method: the number of optimal solutions found, average CPU time over all runs (including the ones terminated by the time limit), and the continuous relaxation value.
The three last columns report some details about the model size: average number of variables, constraints, and non-zeros elements.

\begin{table}[hbt]
	\scriptsize
	\begin{center}
    	\caption{Comparison of the proposed methods for preprocessed SMTI-GRP real-world instances} \label{table-1}
	\setlength{\tabcolsep}{0.2cm}
	\begin{tabular}{ccccrrrrrrrrrrrrrrr}
     	\toprule
        \multicolumn{4}{c}{\multirow{1}{*}{Method}} &  \multicolumn{3}{c}{{Values}} & \multicolumn{3}{c}{{Model size}} \\
        \cmidrule(r){1-4} \cmidrule(r){5-7} \cmidrule(r){8-10}
        index & \makecell{dummy\\variables} & \makecell{stab. cons. \\ merging} & \makecell{double\\stab. cons.} & \#opt &  \makecell{time}   &  \makecell{continuous\\relaxation}  & \makecell{number of\\variables}  & \makecell{number of\\constraints} & \makecell{number of\\non-zeros} \\
        \cmidrule(r){1-4} \cmidrule(r){5-7} \cmidrule(r){8-10}
M1 &	&	&	&	21&	434.0&	 \num{42966.0}&	 \num{94764}&	 \num{96208}&	 \num{39152977}\\
M2 &	x &	&	&	22&	86.8&		 \num{42966.0}&	 \num{101220}&	 \num{101220}&	 \num{390790}\\
M3 &	&	x &	&	20&	416.3&		 \num{43030.4}&	 \num{94764}&	 \num{3457}&	 \num{13619285}\\
M4 &	x &	x &	&	22&	73.7&		 \num{43030.4}&	 \num{101220}&	 \num{8468}&	 \num{298038}\\
M5 &	&	x &	x &	20&	722.0&		 \num{43010.0}&	 \num{94764}&	 \num{7899}&	 \num{39853723}\\
M6 &	x &	x &	x &	22&	66.2&		 \num{43010.0}&	 \num{101220}&	 \num{12911}&	 \num{397245}\\
 \bottomrule
\end{tabular}
\end{center}
\end{table}
The table shows a number of interesting facts:
\begin{itemize}
\item The real-world instances did not challenge our algorithms as even the basic model M1, (introduced in \cite{KM14}) solves 21 out of the 22 instances.
\item Only the algorithms using dummy variables solve all the real-world instances.
    Algorithms using dummy variables are one order of magnitude faster than those that do not (e.g., 434~seconds on average for M1 versus 86.8 for M2).
    This is probably due to the decrease by two orders of magnitude in the average number of non-zero elements.
\item Merging stability constraints seems to have a positive effect in terms of CPU time (434~seconds on average for M1 versus $416.3$ for M3, and $86.8$ seconds for M2 versus 73.7 for M4), but it remains marginal compared to the use of dummy variables.
    While using this feature worsens the continuous relaxation value (e.g., from $\num{42966}$ for M1 to $\num{43030.4}$ for M3), it also decreases the model size (e.g., from $\num{96208}$ constraints for M1 to $\num{3457}$ for M3).
\item Provided that constraint merging is used, it is unclear if using double stability constraints is beneficial: while it increases the average time from $416.3$ seconds for M3 to $722$ for M5, it also decreases the average time from $73.7$ seconds for M4 to $66.2$ for M6.
    In contrast to constraint merging, the use of double stability constraints improves the continuous relaxation value at the expense of increasing the model size.
\end{itemize}
\rev{Supplementary experiments showed that, by dropping the stability constraints, the optimal objective value of our MAX-WT-SMTI-GRP real-world instances would increase by 3\% on average (with a minimum of 0\% and a maximum of 18\%).
Without stability constraints, the problem can be solved in polynomial time within seconds \cite{GT89}.}

Table \ref{table-2} studies the impact of the preprocessing discussed in Section \ref{Pre}, the use of the Gale-Shapley algorithm \cite{GS62} to give a warm-start to the solver, and the use of priorities on the dummy variables during the branch-and-bound process.
We tested these features on the basic model~(M1), and one of the fastest algorithms (M4). We selected M4 because it obtained very good results while keeping a relatively low number of variables, constraints, and non-zero elements.
In our experiments involving a warm start, we called the Gale-Shapley algorithm 100~times after breaking ties randomly, and fed the solver with the best solution found.

\begin{table}[H]
	\scriptsize
	\begin{center}
    	\caption{Comparison of M1 and M4 with additional features for SMTI-GRP real-world instances} \label{table-2}
	\setlength{\tabcolsep}{0.2cm}
	\begin{tabular}{ccccrrrrrrrrrrrrrrr}
     	\toprule
        \multicolumn{4}{c}{\multirow{1}{*}{Method}} &  \multicolumn{2}{c}{{Values}} & \multicolumn{3}{c}{{Model size}} \\
        \cmidrule(r){1-4} \cmidrule(r){5-6} \cmidrule(r){7-9}
        index & \makecell{preprocessing} & \makecell{warm\\start} & \makecell{dummy variables \\ priorities} & \#opt &  \makecell{time} & \makecell{number of\\variables}  & \makecell{number of\\constraints} & \makecell{number of\\non-zeros} \\
        \cmidrule(r){1-4} \cmidrule(r){5-6} \cmidrule(r){7-9}
M1 &	&	&	&	6&	1325.0&			 \num{237666}&	 \num{239110}&	 \num{162252108}\\
M1 &	x &	&	&	21&	434.0&			 \num{94764}&	 \num{96208}&	 \num{39152977}\\
M1 &	x &	x &	&	21&	288.9&			 \num{94764}&	 \num{96208}&	 \num{39152977}\\
M4 &	 &	&	&	22&	155.4&			 \num{248442}&	 \num{15689}&	 \num{738240}\\
M4 &	x &	 &	 &	22&	73.7&			 \num{101220}&	 \num{8468}&	 \num{298038}\\
M4 &	x &	x &	 &	22&	59.2&			 \num{101220}&	 \num{8468}&	 \num{298038}\\
M4 &	x &	&	x &	22&	76.7&			 \num{101220}&	 \num{8468}&	 \num{298038}\\
 \bottomrule
\end{tabular}
\end{center}
\end{table}
The results in Table \ref{table-2} show that:
\begin{itemize}
\item Without preprocessing, the basic model M1 can only solve 6 out of 22 instances.
\item Preprocessing leads to the removal of more than half of the variables, about half of the constraints, and more than half of the non-zero elements, both for models M1 and M4.
\item Applying Gale-Shapley algorithm is useful as it reduces the average CPU time of M1 from 434 seconds to 288.9 and the average CPU time of M4 from 73.7 seconds to 59.2.
\item Giving a high priority to the dummy variables $\YIK{i}{k}$ and $\YPJK{j}{k}$ during the branching process of~M4 does not help the solver.
    As this feature did not seem promising, we decided not to try further combinations of M4.
    We also mention that since M1 does not have dummy variables, the model could not be used to test this feature.
\end{itemize}

\subsubsection{Augmented instances}

As all real-world instances could be solved within an hour by the best method, we generated bigger instances to test the limits of our models.
In order to keep the properties of the original instance, we used it as a basis to create the new benchmark.

We started by duplicating each child and each family $\kappa$ times.
Then, we perturbed the weight obtained by each pair in the following way: (1) identify the group of the weight (common or uncommon) and the position $i$ of the weight in that group, and (2), change the weight to the one in position $i+x$ with probability $p(x)$ for $(x,p(x)) \in \{(-2,0.1), (-1,0.2), (0, 0.4), (1, 0.2), (2, 0.1)\}$.

For each $\kappa \in \{1, 2\}$, we generated 10 instances, to which we applied the 22 thresholds, resulting in 440 new instances in total.
Table \ref{table-D} compares the six possible combinations of the proposed improvements, with preprocessing, on the 440 augmented SMTI-GRP instances.

\begin{table}[hbt]
	\scriptsize
	\begin{center}
    	\caption{Comparison of the methods for preprocessed SMTI-GRP augmented instances} \label{table-D}
	\setlength{\tabcolsep}{0.15cm}
	\begin{tabular}{ccccrrrrrrrrrrrrrrr}
     	\toprule
         \multicolumn{1}{c}{\multirow{2}{*}{Index}} &
        \multicolumn{6}{c}{{$\kappa$ = 1}} & \multicolumn{6}{c}{{$\kappa$ = 2}}  \\
    \cmidrule(r){2-7} \cmidrule(r){8-13}
&  \#opt &  \makecell{time}   &  \makecell{cont.\\relax.}  & \makecell{nb. \\ var.}  & \makecell{nb.\\cons.} & \makecell{nb.\\nzs.} &
\#opt &  \makecell{time}   &  \makecell{cont.\\relax.}  & \makecell{nb. \\ var.}  & \makecell{nb.\\cons.} & \makecell{nb.\\nzs.} \\
 \cmidrule(r){1-1}  \cmidrule(r){2-7} \cmidrule(r){8-13}
M1 &				220&	340.6&	 \num{45460.0}&	 \num{63974}&	 \num{65418}&	 \num{21214917}&	74&	2701.5&	 \num{92341.5}&	 \num{358549}&	 \num{361437}&	 \num{278330845}\\
M2 &				220&	114.8&	 \num{45460.0}&	 \num{72542}&	 \num{72542}&	 \num{271809}&	121&	2185.1&	 \num{92341.5}&	 \num{379695}&	 \num{379695}&	 \num{1473987}\\
M3 &				220&	201.5&	 \num{45494.8}&	 \num{63974}&	 \num{3770}&	 \num{6742990}&	52&	2962.5&	 \num{92476.6}&	 \num{358549}&	 \num{8865}&	 \num{94883495}\\
M4 &				220&	84.0&	 \num{45494.8}&	 \num{72542}&	 \num{10894}&	 \num{210162}&	134&	2114.1&	 \num{92476.6}&	 \num{379695}&	 \num{27122}&	 \num{1121414}\\
M5 &				220&	507.9&	 \num{45484.5}&	 \num{63974}&	 \num{10012}&	 \num{21874731}&	43&	3046.8&	 \num{92450.1}&	 \num{358549}&	 \num{24034}&	 \num{282418566}\\
M6 &				220&	55.9&	 \num{45484.5}&	 \num{72542}&	 \num{17136}&	 \num{280377}&	143&	2106.7&	 \num{92450.1}&	 \num{379695}&	 \num{42291}&	 \num{1495132}\\
 \bottomrule
\end{tabular}
\end{center}
\end{table}

We observe that the comments made about the real-world instances are still valid here: using dummy variables reduces significantly the number of non-zero elements and improves the performance of the model.
Constraint merging seems to have a positive effect, especially when used together with dummy variables.
The same behaviour is observed for double stability constraints.
Overall, we notice a clear improvement from the basic model M1 to the more sophisticated model M6, as the former could only solve 74 out of the 220 instances with $\kappa = 2$, while the latter could solve 143 of them.
Despite this remarkable improvement, some instances with $\kappa = 2$ (i.e., with 1100 children and 1788 families) remain unsolved after one hour of computing time.
We also generated instances with $\kappa = 3$, but running these experiments was impractical due to the large memory requirements of the models.

\subsection{MAX-SMTI}  \label{s2}

Even though we initially developed our models for MAX-WT-SMTI-GRP, they can also be used to solve MAX-SMTI.
As, to the best of our knowledge, no \rev{MAX-SMTI datasets} are available in the literature, we used the generator described in \cite{IM09} to create new instances in order to test the effectiveness of our methods on this problem too.
We tried several combinations of number of agents (\{$\num{10000}, \num{25000}, \num{50000}\}$ on each side), tie density ($\{0.75, 0.85, 0.95\}$ on each side), and preference list length ($\{3, 5, 10\}$ on one side, as the generator does not support a limit on the preference list lengths on both sides).
We generated 10 instances for each combination resulting in 270 SMTI instances in total.

Tables \ref{table-A} to \ref{table-C} compare the six possible combinations of the proposed improvements, with preprocessing, on the 270 random SMTI instances.

\begin{table}[b]
	\scriptsize
	\begin{center}
    	\caption{Comparison of the methods for preprocessed SMTI instances with $\num{10000}$ agents} \label{table-A}
	\setlength{\tabcolsep}{0.1cm}
	\begin{tabular}{ccccrrrrrrrrrrrrrrr}
     	\toprule
        \multicolumn{1}{c}{\multirow{3}{*}{Index}} & \multicolumn{6}{c}{preference list length = 3} & \multicolumn{6}{c}{preference list length = 5} & \multicolumn{6}{c}{preference list length = 10} \\
         \cmidrule(r){2-7} \cmidrule(r){8-13} \cmidrule(r){14-19}
        & \multicolumn{2}{c}{0.75} & \multicolumn{2}{c}{0.85}  & \multicolumn{2}{c}{0.95}
        & \multicolumn{2}{c}{0.75} & \multicolumn{2}{c}{0.85}  & \multicolumn{2}{c}{0.95}
        & \multicolumn{2}{c}{0.75} & \multicolumn{2}{c}{0.85}  & \multicolumn{2}{c}{0.95}   \\
    \cmidrule(r){2-3} \cmidrule(r){4-5} \cmidrule(r){6-7}  \cmidrule(r){8-9}  \cmidrule(r){10-11} \cmidrule(r){12-13} \cmidrule(r){14-15} \cmidrule(r){16-17} \cmidrule(r){18-19}
&  \#opt &  \makecell{time}   &   \#opt &  \makecell{time}   & \#opt &  \makecell{time} &
  \#opt &  \makecell{time}   &   \#opt &  \makecell{time}   & \#opt &  \makecell{time} &
  \#opt &  \makecell{time}   &   \#opt &  \makecell{time}   & \#opt &  \makecell{time} \\
        \cmidrule(r){1-1} \cmidrule(r){2-3} \cmidrule(r){4-5} \cmidrule(r){6-7}  \cmidrule(r){8-9}  \cmidrule(r){10-11} \cmidrule(r){12-13} \cmidrule(r){14-15} \cmidrule(r){16-17} \cmidrule(r){18-19}
M1&	10&	10&	10&	8&	10&	8&	10&	155&	10&	94&	10&	39&	0&	3600&	4&	3428&	10&	382\\
M2&	10&	11&	10&	9&	10&	7&	10&	361&	10&	134&	10&	24&	0&	3600&	0&	3600&	10&	288\\
M3&	10&	5&	10&	5&	10&	5&	10&	75&	10&	38&	10&	23&	9&	2731&	9&	1555&	10&	490\\
M4&	10&	10&	10&	9&	10&	7&	10&	150&	10&	43&	10&	18&	1&	3543&	10&	1723&	10&	95\\
M5&	10&	10&	10&	8&	10&	8&	10&	115&	10&	74&	10&	41&	0&	3601&	10&	2073&	10&	525\\
M6&	10&	10&	10&	8&	10&	6&	10&	442&	10&	163&	10&	29&	0&	3600&	0&	3600&	10&	496\\
 \bottomrule
\end{tabular}
\end{center}
\end{table}

\begin{table}[t]
	\scriptsize
	\begin{center}
    	\caption{Comparison of the methods for preprocessed SMTI instances with $\num{25000}$ agents} \label{table-B}
	\setlength{\tabcolsep}{0.1cm}
	\begin{tabular}{ccccrrrrrrrrrrrrrrr}
     	\toprule
        \multicolumn{1}{c}{\multirow{3}{*}{Index}} & \multicolumn{6}{c}{preference list length = 3} & \multicolumn{6}{c}{preference list length = 5} & \multicolumn{6}{c}{preference list length = 10} \\
         \cmidrule(r){2-7} \cmidrule(r){8-13} \cmidrule(r){14-19}
        & \multicolumn{2}{c}{0.75} & \multicolumn{2}{c}{0.85}  & \multicolumn{2}{c}{0.95}
        & \multicolumn{2}{c}{0.75} & \multicolumn{2}{c}{0.85}  & \multicolumn{2}{c}{0.95}
        & \multicolumn{2}{c}{0.75} & \multicolumn{2}{c}{0.85}  & \multicolumn{2}{c}{0.95}   \\
    \cmidrule(r){2-3} \cmidrule(r){4-5} \cmidrule(r){6-7}  \cmidrule(r){8-9}  \cmidrule(r){10-11} \cmidrule(r){12-13} \cmidrule(r){14-15} \cmidrule(r){16-17} \cmidrule(r){18-19}
&  \#opt &  \makecell{time}   &   \#opt &  \makecell{time}   & \#opt &  \makecell{time} &
  \#opt &  \makecell{time}   &   \#opt &  \makecell{time}   & \#opt &  \makecell{time} &
  \#opt &  \makecell{time}   &   \#opt &  \makecell{time}   & \#opt &  \makecell{time} \\
        \cmidrule(r){1-1} \cmidrule(r){2-3} \cmidrule(r){4-5} \cmidrule(r){6-7}  \cmidrule(r){8-9}  \cmidrule(r){10-11} \cmidrule(r){12-13} \cmidrule(r){14-15} \cmidrule(r){16-17} \cmidrule(r){18-19}
M1&	10&	65&	10&	53&	10&	45&	10&	1424&	10&	686&	10&	199&	0&	3600&	0&	3601&	4&	3439\\
M2&	10&	61&	10&	48&	10&	35&	6&	3329&	10&	928&	10&	144&	0&	3600&	0&	3600&	6&	3310\\
M3&	10&	23&	10&	24&	10&	23&	10&	583&	10&	230&	10&	117&	0&	3600&	0&	3601&	3&	3234\\
M4&	10&	56&	10&	46&	10&	32&	10&	1527&	10&	344&	10&	89&	0&	3600&	0&	3600&	10&	739\\
M5&	10&	55&	10&	46&	10&	38&	10&	873&	10&	429&	10&	170&	0&	3601&	0&	3601&	7&	2960\\
M6&	10&	54&	10&	44&	10&	31&	0&	3600&	10&	1449&	10&	156&	0&	3600&	0&	3600&	0&	3600\\
 \bottomrule
\end{tabular}
\end{center}
\end{table}

\begin{table}[t]
	\scriptsize
	\begin{center}
    	\caption{Comparison of the methods for preprocessed SMTI instances with $\num{50000}$ agents} \label{table-C}
	\setlength{\tabcolsep}{0.1cm}
	\begin{tabular}{ccccrrrrrrrrrrrrrrr}
     	\toprule
        \multicolumn{1}{c}{\multirow{3}{*}{Index}} & \multicolumn{6}{c}{preference list length = 3} & \multicolumn{6}{c}{preference list length = 5} & \multicolumn{6}{c}{preference list length = 10} \\
         \cmidrule(r){2-7} \cmidrule(r){8-13} \cmidrule(r){14-19}
        & \multicolumn{2}{c}{0.75} & \multicolumn{2}{c}{0.85}  & \multicolumn{2}{c}{0.95}
        & \multicolumn{2}{c}{0.75} & \multicolumn{2}{c}{0.85}  & \multicolumn{2}{c}{0.95}
        & \multicolumn{2}{c}{0.75} & \multicolumn{2}{c}{0.85}  & \multicolumn{2}{c}{0.95}   \\
    \cmidrule(r){2-3} \cmidrule(r){4-5} \cmidrule(r){6-7}  \cmidrule(r){8-9}  \cmidrule(r){10-11} \cmidrule(r){12-13} \cmidrule(r){14-15} \cmidrule(r){16-17} \cmidrule(r){18-19}
&  \#opt &  \makecell{time}   &   \#opt &  \makecell{time}   & \#opt &  \makecell{time} &
  \#opt &  \makecell{time}   &   \#opt &  \makecell{time}   & \#opt &  \makecell{time} &
  \#opt &  \makecell{time}   &   \#opt &  \makecell{time}   & \#opt &  \makecell{time} \\
        \cmidrule(r){1-1} \cmidrule(r){2-3} \cmidrule(r){4-5} \cmidrule(r){6-7}  \cmidrule(r){8-9}  \cmidrule(r){10-11} \cmidrule(r){12-13} \cmidrule(r){14-15} \cmidrule(r){16-17} \cmidrule(r){18-19}
M1&	10&	230&	10&	202&	10&	146&	0&	3600&	10&	2565&	10&	523&	0&	3601&	0&	3601&	0&	3601\\
M2&	10&	239&	10&	209&	10&	146&	0&	3600&	2&	3336&	10&	407&	0&	3600&	0&	3600&	0&	3601\\
M3&	10&	74&	10&	78&	10&	73&	10&	1966&	10&	828&	10&	291&	0&	3601&	0&	3601&	0&	3602\\
M4&	10&	220&	10&	187&	10&	134&	0&	3600&	10&	1046&	10&	220&	0&	3600&	0&	3600&	10&	2132\\
M5&	10&	212&	10&	185&	10&	142&	10&	2684&	10&	1227&	10&	339&	0&	3597&	0&	3601&	0&	3603\\
M6&	10&	209&	10&	190&	10&	132&	0&	3600&	0&	3600&	10&	401&	0&	3600&	0&	3600&	0&	3600\\
 \bottomrule
\end{tabular}
\end{center}
\end{table}

We observe that:
\begin{itemize}
\item As expected, for a given tie density and preference list length, a larger number of agents results in harder instances.
\item For a given number of agents and a given preference list length, instances with a tie density at 0.75 are harder than those at 0.85, which are themselves harder than those at 0.95.
     This could be explained by the fact that the difference between the continuous relaxation and the optimal solution size is smaller as the tie density increases.
     For example, for M1 with $\num{25000}$ agents and a preference list length of 3, we observed an average absolute difference between the two values of 5.07 when the tie density was 0.75, 2.81  when the tie density was 0.85, and 0.15 when the tie density was 0.95.
\item For a given number of agents and tie density, shorter preference lists make the instances easier.
    This is probably due to the fact that the models have fewer variables, constraints, and non-zero elements.
\item Once again, the effectiveness of the new models is demonstrated: out of the 30 instances with preference list length of 5 and $\num{50000}$ agents, M1 solves 20 instances in total while M3 solves all 30 instances.
\item For each combination, the best results were obtained either by M3 or by M4.
    We distinguish several cases:
    \begin{itemize}
    \item When the preference list length is equal to 3, M3 is always the best.
        Indeed, the preference lists are too short to obtain any benefit from the dummy variables.
        For example:
        \begin{itemize}
        \item For tie density 0.85, preference list length of 3, and $\num{25000}$ agents, M3 (resp.~M4) has on average \num{72227} variables (resp.~\num{133976}), \num{81322} constraints (resp.~\num{93071}), and \num{408556} non-zero elements (resp. \num{322817}, i.e., 21\% less).
        \end{itemize}
    \item When the preference list length is equal to 5, M3 is always the best for tie densities 0.75 and 0.85 while M4 is always the best for tie density 0.95.
        This could be explained by the fact that high tie densities involve fewer tie groups.
        Thus, fewer additional constraints and variables are required by the dummy variables and using them saves more non-zero elements.
        For example:
        \begin{itemize}
        \item For tie density 0.85, preference list length of 5, and $\num{25000}$ agents, M3 (resp. M4) has on average \num{119551} variables (resp. \num{196273}), \num{88448} constraints (resp. \num{115171}), and \num{875334} non-zero elements (resp. \num{500736}, i.e., 43\% less).
        \item For tie density 0.95, preference list length of 5, and $\num{25000}$ agents, M3 (resp. M4) has on average \num{124223} variables (resp. \num{183693}), \num{79797} constraints (resp. \num{89267}), and \num{965056} non-zero elements (resp. \num{471574}, i.e., 51\% less).
        \end{itemize}
    \item When the preference list length is equal to 10, M3 is the best for tie density 0.75, while M4 is the best for tie densities 0.85 and 0.95.
        Again, the use of dummy variables is shown to be beneficial with longer preference lists.
    \end{itemize}
\end{itemize}
We also report that additional computational experiments showed that: \rev{(i) the preprocessing techniques introduced in Section \ref{Pre} have little effect on these instances, and (ii) dropping the stability constraints would increase the size of the matching by at most 1\% and make the problem polynomial-time solvable \cite{HK73}.}

\subsection{MAX-HRT}  \label{s3}
In many instances of SMTI and HRT, it can be assumed that agents establish their ranking based on their own individual preferences.
However, sometimes it is the case that agents' preferences are formulated on the basis of objective criteria.
For example, in a specific version of MAX-HRT, hospitals only consider the grades of the doctors for their preference lists.
In this situation, the so-called {\em master list}, a ranking of all the doctors based on their grades, is made at the beginning.
The preference list of each hospital is then an exact copy of the master list from which the doctors who did not apply to the given hospital were removed.
We tested our algorithms on instances both with and without a master list of doctors.

\subsubsection{Non master list instances}

We had access to instances of the Scottish Foundation Allocation Scheme, which assigned medical graduates to Scottish hospitals, for the years 2006, 2007, and 2008 \cite{IM09}.
These 3 instances, called ``SFAS" in the following, have respectively 759, 781 and 748 doctors, 53, 53 and 52 hospitals, and 801, 789 and 752 available positions.
Doctors chose exactly 6 hospitals, although a small number of exceptions with shorter preference lists were found.
No master list was used by the hospitals to establish their preference lists, even if we observed a tendency for some doctors to be often well (or badly) ranked.
The tie density on the hospitals' side was 0.9468, 0.7861 and 0.8424 respectively.
The tie density on the doctors' side was 0 as the doctors were asked to provide strict preferences.

We also tested our algorithms on the dataset ``SET2'' described in \cite{KM14} and available at \url{http://researchdata.gla.ac.uk/244/}.
It comprises 700 instances with 100, 150, 200, 250, 300, 350, 400 doctors, and 7, 10, 14, 17, 21, 24, 28 hospitals, respectively.
The number of available positions was exactly equal to the number of doctors.
Doctors chose exactly 5 hospitals, and the tie density was equal to 0.85.

Table \ref{table-4} compares twelve combinations of the proposed improvements, with preprocessing, on the 3 SFAS real-world instances.
The meaning of the columns is unchanged with respect to Table \ref{table-1}, except for ``stability constraints", which now indicates the set of stability constraints used by the model, and ``stability constraint merging", which now indicates whether constraints~\eqref{eq:S3} were merged or not.

\begin{table}[bht]
	\scriptsize
	\begin{center}
    	\caption{Comparison of the proposed methods for preprocessed HRT SFAS instances} \label{table-4}
	\setlength{\tabcolsep}{0.2cm}
	\begin{tabular}{ccccrrrrrrrrrrrrrrr}
     	\toprule
        \multicolumn{4}{c}{\multirow{1}{*}{Method}} &  \multicolumn{3}{c}{{Values}} & \multicolumn{3}{c}{{Model size}} \\
        \cmidrule(r){1-4} \cmidrule(r){5-7} \cmidrule(r){8-10}
        index & \makecell{dummy\\variables} & \makecell{stability \\ constraints} & \makecell{stab. cons. \\ merging} & \#opt &  \makecell{time}   &  \makecell{continuous\\relaxation}  & \makecell{number of\\variables}  & \makecell{number of\\constraints} & \makecell{number of\\non-zeros} \\
        \cmidrule(r){1-4} \cmidrule(r){5-7} \cmidrule(r){8-10}
N1 &	&		\eqref{eq:modF7} &		&			3&	144.8&		 \num{747.4}&	 \num{1898}&	 \num{2714}&	 \num{63272}\\
N2 &	x &		\eqref{eq:modF7} &		&			3&	18.7&		 \num{747.4}&	 \num{4146}&	 \num{4146}&	 \num{11278}\\
N3 &	&			\eqref{eq:S1}-\eqref{eq:S5} &	&			3&	10.1&		 \num{744.5}&	 \num{2300}&	 \num{5014}&	 \num{16495}\\
N4 &	x &			\eqref{eq:S1}-\eqref{eq:S5} &	&			3&	9.6&		 \num{744.5}&	 \num{4548}&	 \num{6446}&	 \num{15879}\\
N5 &	&			\eqref{eq:S1}-\eqref{eq:S5} &		\eqref{eq:S3} $\rightarrow$ \eqref{eq:SP3} &		3&	15.9&		 \num{747.8}&	 \num{2300}&	 \num{3465}&	 \num{14946}\\
N6 &	x &			\eqref{eq:S1}-\eqref{eq:S5} &		\eqref{eq:S3} $\rightarrow$ \eqref{eq:SP3} &		3&	12.8&		 \num{747.8}&	 \num{4548}&	 \num{4897}&	 \num{14330}\\
N7 &	&		\eqref{eq:modF7} and \eqref{eq:S1}-\eqref{eq:S5} &		&			3&	16.1&		 \num{744.3}&	 \num{2300}&	 \num{6912}&	 \num{75971}\\
N8 &	x &		\eqref{eq:modF7} and \eqref{eq:S1}-\eqref{eq:S5} &		&			3&	5.4&		 \num{744.3}&	 \num{4548}&	 \num{8345}&	 \num{19676}\\
N9 &	&		\eqref{eq:modF7} and \eqref{eq:S1}-\eqref{eq:S5} &			\eqref{eq:S3} $\rightarrow$ \eqref{eq:SP3} &		3&	25.5&		 \num{746.2}&	 \num{2300}&	 \num{5363}&	 \num{74422}\\
N10 &	x &		\eqref{eq:modF7} and \eqref{eq:S1}-\eqref{eq:S5} &			\eqref{eq:S3} $\rightarrow$ \eqref{eq:SP3} &		3&	9.3&		 \num{746.2}&	 \num{4548}&	 \num{6796}&	 \num{18127}\\
N11 &	&		 \eqref{eq:S1}-\eqref{eq:S5} and \eqref{eq:mix1} &		&			3&	5.6&		 \num{744.3}&	 \num{2300}&	 \num{5311}&	 \num{21377}\\
N12 &	x &		 \eqref{eq:S1}-\eqref{eq:S5} and \eqref{eq:mix1} &		&			3&	3.3&		 \num{744.3}&	 \num{4548}&	 \num{6743}&	 \num{16472}\\
 \bottomrule
\end{tabular}
\end{center}
\end{table}
The table shows a number of interesting facts:
\begin{itemize}
 \item The real-world instances did not challenge the algorithms.
 \item Using dummy variables seems beneficial as algorithms that use them have always a lower average running time than those that do not (e.g., 144.8 seconds for N1 versus 18.7 for N2 and 5.6 seconds for N11 versus 3.3 for N12).
     Even if the decrease in the average number of non-zero elements is less spectacular than for SMTI-GRP, it is still significant for some models (e.g., $\num{63272}$ for N1 versus $\num{11278}$ for N2 and $\num{21277}$ for N11 versus $\num{16472}$ for N12).
\item The kind of stability constraints used by the model appears to have a significant impact on the results: while N1 uses on average 144.8 seconds to solve the SFAS instances, N3 (which uses only the new set of stability constraints) requires merely 10.1 seconds.
    N7, that uses both sets of stability constraints, is a bit worse, even if it has the best continuous relaxation value: 744.3 versus 747.4 for N1 and 744.5 for N3.
    N11, that replaces the original set of stability constraints \eqref{eq:modF7} by \eqref{eq:mix1}, obtains the best relaxation and one of the best average running times.
\item Merging stability constraints \eqref{eq:S3} is not beneficial on these instances, as all algorithms that merge stability constraints have a worse average running compared to those that do not (e.g., 15.9 seconds on average for N5 versus 10.1 for N3, and 25.5 seconds for N9 versus 16.1 for N7).
    This can be explained by the fact that almost no gain is obtained in terms of number of constraints (e.g., 3465 for N5 versus 5014 for N3), but a significant loss is observed in terms of continuous relaxation value (e.g., 747.8 for N5 versus 744.5 for N3).
\item Overall, this computational experiment suggests that the best configurations are N8, N11, and N12.
    All of them use two sets of stability constraints and no constraint merging.
\end{itemize}
\rev{Again, if stability constraints are dropped, the problem becomes solvable in polynomial time \cite{Gab83} and the matching size is increased by at most 1\%.}

Table \ref{table-3a} compares the same twelve combinations on the literature instances SET2.
Overall, the SET2 instances cannot be considered very challenging as all algorithms apart from N1 can solve them to optimality in less than 5 seconds on average.
In addition, we notice that the algorithms' behaviour does not change significantly: (i) dummy variables still seem useful, even if for some configurations (N4 and N12) no significant change is observed, (ii) the kind of stability constraints used still has a significant impact on the overall results, and (iii) using constraint merging still deteriorates the overall results.
We note also that no major difference in the continuous relaxation is observed among the models.
This is probably due to the fact that in 598 instances out of 700, all doctors could be assigned to a hospital, so the continuous relaxation value and the optimal solution were identical.

\begin{table}[bht]
	\scriptsize
	\begin{center}
    	\caption{Comparison of the proposed methods for preprocessed HRT SET2 instances} \label{table-3a}
	\setlength{\tabcolsep}{0.2cm}
	\begin{tabular}{ccccrrrrrrrrrrrrrrr}
     	\toprule
        \multicolumn{4}{c}{\multirow{1}{*}{Method}} &  \multicolumn{3}{c}{{Values}} & \multicolumn{3}{c}{{Model size}} \\
        \cmidrule(r){1-4} \cmidrule(r){5-7} \cmidrule(r){8-10}
        index & \makecell{dummy\\variables} & \makecell{stability \\ constraints} & \makecell{stab. cons. \\ merging} & \#opt &  \makecell{time}   &  \makecell{continuous\\relaxation}  & \makecell{number of\\variables}  & \makecell{number of\\constraints} & \makecell{number of\\non-zeros} \\
        \cmidrule(r){1-4} \cmidrule(r){5-7} \cmidrule(r){8-10}
N1 &	&		\eqref{eq:modF7} &		&			694&	59.7&		 \num{249.9}&	 \num{619}&	 \num{886}&	 \num{16046}\\
N2 &	x &		\eqref{eq:modF7} &		&			700&	1.8&		 \num{249.9}&	 \num{1382}&	 \num{1382}&	 \num{3735}\\
N3 &	&			\eqref{eq:S1}-\eqref{eq:S5} &	&			700&	0.7&		 \num{249.9}&	 \num{781}&	 \num{1667}&	 \num{5315}\\
N4 &	x &			\eqref{eq:S1}-\eqref{eq:S5} &	&			700&	0.8&		 \num{249.9}&	 \num{1544}&	 \num{2163}&	 \num{5297}\\
N5 &	&			\eqref{eq:S1}-\eqref{eq:S5} &		\eqref{eq:S3} $\rightarrow$ \eqref{eq:SP3} &		700&	4.9&		 \num{249.9}&	 \num{781}&	 \num{1193}&	 \num{4841}\\
N6 &	x &			\eqref{eq:S1}-\eqref{eq:S5} &		\eqref{eq:S3} $\rightarrow$ \eqref{eq:SP3} &		700&	1.6&		 \num{249.9}&	 \num{1544}&	 \num{1689}&	 \num{4823}\\
N7 &	&		\eqref{eq:modF7} and \eqref{eq:S1}-\eqref{eq:S5} &		&			700&	1.3&		 \num{249.9}&	 \num{781}&	 \num{2286}&	 \num{20123}\\
N8 &	x &		\eqref{eq:modF7} and \eqref{eq:S1}-\eqref{eq:S5} &		&			700&	0.6&		 \num{249.9}&	 \num{1544}&	 \num{2782}&	 \num{6534}\\
N9 &	&		\eqref{eq:modF7} and \eqref{eq:S1}-\eqref{eq:S5} &			\eqref{eq:S3} $\rightarrow$ \eqref{eq:SP3} &		700&	3.1&		 \num{249.9}&	 \num{781}&	 \num{1812}&	 \num{19649}\\
N10 &	x &		\eqref{eq:modF7} and \eqref{eq:S1}-\eqref{eq:S5} &			\eqref{eq:S3} $\rightarrow$ \eqref{eq:SP3} &		700&	0.8&		 \num{249.9}&	 \num{1544}&	 \num{2308}&	 \num{6060}\\
N11 &	&		 \eqref{eq:S1}-\eqref{eq:S5} and \eqref{eq:mix1} &		&			700&	0.7&		 \num{249.9}&	 \num{781}&	 \num{1794}&	 \num{7706}\\
N12 &	x &		 \eqref{eq:S1}-\eqref{eq:S5} and \eqref{eq:mix1} &		&			700&	0.7&		 \num{249.9}&	 \num{1544}&	 \num{2291}&	 \num{5552}\\
 \bottomrule
\end{tabular}
\end{center}
\end{table}

As all the real-world and literature instances could be solved within an hour, we generated new instances with the instance generator described in \cite{IM09}.
These instances have $759 \times i$ doctors, $53 \times i$ hospitals, and $775 \times i$ available positions, where $i \in \{1, 2, 3, 5, 10\}$ and are called ``RDM$i$" in the following.
Doctors chose between 5 and 6 hospitals, and the tie density on the hospitals' side was equal to 0.85.
For each $i$, 30 instances were created, resulting in 150 instances in total.
These parameters were chosen to mimic the real-world SFAS instances at a larger scale.

Table \ref{table-5} compares the twelve combinations on the RDM$i$ instances. In order to be concise, we only report in the table the number of optimal solutions found and the average running time for each method.

\begin{table}[hbt]
	\scriptsize
	\begin{center}
    	\caption{Comparison of the proposed methods for preprocessed HRT RDM instances} \label{table-5}
	\setlength{\tabcolsep}{0.15cm}
	\begin{tabular}{ccccrrrrrrrrrrrrrrr}
     	\toprule
        \multicolumn{4}{c}{\multirow{1}{*}{Method}} &  \multicolumn{2}{c}{{RDM1}} & \multicolumn{2}{c}{{RDM2}}  & \multicolumn{2}{c}{{RDM3}}  & \multicolumn{2}{c}{{RDM5}} \\
        \cmidrule(r){1-4} \cmidrule(r){5-6} \cmidrule(r){7-8} \cmidrule(r){9-10}  \cmidrule(r){11-12}
        index & \makecell{dummy\\variables} & \makecell{stability \\ constraints} & \makecell{stab. cons. \\ merging} & \#opt &  \makecell{time}   &   \#opt &  \makecell{time}   & \#opt &  \makecell{time} & \#opt &  \makecell{time} \\
        \cmidrule(r){1-4} \cmidrule(r){5-6} \cmidrule(r){7-8} \cmidrule(r){9-10}  \cmidrule(r){11-12}
N1 &	&		\eqref{eq:modF7} &		&			25&	848&	12&	2656&	2&	3485&	0&	3600\\
N2 &	x &		\eqref{eq:modF7} &		&			29&	336&	17&	2056&	9&	2919&	2&	3524\\
N3 &	&			\eqref{eq:S1}-\eqref{eq:S5} &	&			30&	82&	16&	1844&	9&	2667&	2&	3491\\
N4 &	x &			\eqref{eq:S1}-\eqref{eq:S5} &	&			30&	136&	18&	1815&	10&	2694&	2&	3395\\
N5 &	&			\eqref{eq:S1}-\eqref{eq:S5} &		\eqref{eq:S3} $\rightarrow$ \eqref{eq:SP3} &		28&	418&	11&	2358&	4&	3253&	 1&	3574\\
N6 &	x &			\eqref{eq:S1}-\eqref{eq:S5} &		\eqref{eq:S3} $\rightarrow$ \eqref{eq:SP3} &		28&	335&	13&	2253&	6&	3009&	 0&	3600\\
N7 &	&		\eqref{eq:modF7} and \eqref{eq:S1}-\eqref{eq:S5} &		&			30&	66&	25&	1170&	14&	2298&	1&	3495\\
N8 &	x &		\eqref{eq:modF7} and \eqref{eq:S1}-\eqref{eq:S5} &		&			30&	62&	25&	1070&	13&	2267&	4&	3372\\
N9 &	&		\eqref{eq:modF7} and \eqref{eq:S1}-\eqref{eq:S5} &			\eqref{eq:S3} $\rightarrow$ \eqref{eq:SP3} &		30&	134&	 20&	1702&	13&	2763&	1&	3509\\
N10 &	x &		\eqref{eq:modF7} and \eqref{eq:S1}-\eqref{eq:S5} &			\eqref{eq:S3} $\rightarrow$ \eqref{eq:SP3} &		30&	52&	25&	 1023&	14&	2184&	2&	3397\\
N11 &	&		 \eqref{eq:S1}-\eqref{eq:S5} and \eqref{eq:mix1} &		&			30&	46&	24&	1270&	13&	2352&	3&	3434\\
N12 &	x &		 \eqref{eq:S1}-\eqref{eq:S5} and \eqref{eq:mix1} &		&			30&	99&	24&	1252&	14&	2251&	3&	3353\\
 \bottomrule
\end{tabular}
\end{center}
\end{table}

Besides the observations made previously, which are still valid overall, we clearly notice that using two sets of stability constraints is significantly faster, especially for RDM2 and RDM3. In addition, we observe that large instances are extremely difficult, as only four of the RDM5~instances could be solved within an hour of computing time per instance.
We also report that none of the tested algorithms could solve any of the RDM10 instances, even though the models had a reasonable size (e.g., for RDM10, N12 used on average \num{40866} variables, \num{60504}~constraints and \num{145259} non-zero elements).
Finally, even if no ``new" configuration clearly outperforms the others, all of them outperform the literature state-of-the-art model N1.

Table \ref{table-6} studies the impact of the preprocessing developed in \cite{IM09} (we recall that the algorithms given in Section \ref{Pre} are applicable to SMTI instance only), the use of the Gale-Shapley algorithm to give a warm-start to the solver, and the use of priorities on some variables during the branch-and-bound process.
We tested these features on the basic model (N1), and one of the best algorithms (N8).
We selected N8 because it obtained good results on all the datasets we tested in comparison with the other configurations (N8 solved 72 RDM$i$ instances versus 71 for N12 and 70 for N11).

\begin{table}[hbt]
	\scriptsize
	\begin{center}
    	\caption{Comparison of N1 and N8 with additional features on RDM1 and RDM2} \label{table-6}
	\setlength{\tabcolsep}{0.2cm}
	\begin{tabular}{ccccrrrrrrrrrrrrrrr}
     	\toprule
        \multicolumn{4}{c}{\multirow{1}{*}{Method}} &  \multicolumn{5}{c}{{RDM1}} & \multicolumn{5}{c}{{RDM2}} \\
        \cmidrule(r){1-4} \cmidrule(r){5-9} \cmidrule(r){10-14}
        index & \makecell{prep.} & \makecell{warm\\start} & \makecell{variable \\ priorities} & \#opt &  \makecell{time} & \makecell{nb. \\var.}  & \makecell{nb. \\cons.} & \makecell{nb. \\nzs.} & \#opt &  \makecell{time} & \makecell{nb. \\var.}  & \makecell{nb. \\cons.} & \makecell{nb. \\ nzs.} \\
       \cmidrule(r){1-4} \cmidrule(r){5-9} \cmidrule(r){10-14}
N1 &	& 	& 	& 	1&	3581&			 \num{4173}&	 \num{4985}&	 \num{225332}&								0&	3600&			 \num{8343}&	 \num{9967}&	 \num{453286}\\	
N1 &	x &	& 	& 	25&	848&			 \num{1614}&	 \num{2426}&	 \num{37563}&								12&	2656&			 \num{3260}&	 \num{4884}&	 \num{76135}\\	
N1 &	x &	x &	& 28&	524&			 \num{1614}&	 \num{2426}&	 \num{37563}&								15&	2271&			 \num{3260}&	 \num{4884}&	 \num{76135}\\	
N8 &	& 	& 	& 	30&	42&			 \num{9755}&	 \num{18102}&	 \num{43738}&								26&	1036&			 \num{19470}&	 \num{36156}&	 \num{87373}\\	
N8 &	x &	& 	& 	30&	62&			 \num{4075}&	 \num{7304}&	 \num{17026}&								25&	1070&			 \num{8214}&	 \num{14733}&	 \num{34365}\\	
N8 &	x &	x &	& 	30&	82&			 \num{4075}&	 \num{7304}&	 \num{17026}&								25&	957&			 \num{8214}&	 \num{14733}&	 \num{34365}\\	
N8 &	x &	& 	$z_{jk}$ &	30&	37&			 \num{4075}&	 \num{7304}&	 \num{17026}&								26&	840&			 \num{8214}&	 \num{14733}&	 \num{34365}\\	
N8 &	x &	& 	$\YIKT{i}{k}$ and $\YPJKT{j}{k}$ &	30&	82&			 \num{4075}&	 \num{7304}&	 \num{17026}&								 25&	1069&			 \num{8214}&	 \num{14733}&	 \num{34365}\\	
 \bottomrule
\end{tabular}
\end{center}
\end{table}
Unlike for MAX-WT-SMTI-GRP, the preprocessing seems useful for N1 but not for N8. Further investigations showed that, for N8, the inner preprocessing of Gurobi removed a similar amount of variables, constraints, and non-zero elements compared with the preprocessing of~\cite{IM09}.
This was neither the case for N1, nor for M1 and M4 for MAX-WT-SMTI-GRP (with the preprocessing of Section \ref{Pre}).
Using the Gale-Shapley algorithm to provide a warm start allowed N1 to solve an additional six instances from RDM1 and RDM2, but it slightly slowed down N8.
Finally, giving priorities to the $\YIKT{i}{k}$ and $\YPJKT{j}{k}$ variables during the branching process does not seem to help the ILP solver, however, prioritising the $z_{jk}$ variables appears to be beneficial.
Further investigations showed that this statement was true for all the models involving $z_{jk}$~variables (i.e., N3, N4, \dots, N12).

\subsubsection{Master list instances}

As there is no existing set of instances that includes a master list in the literature, we used the generator described in \cite{IM09} to create new data sets.
The same parameters ($759 \times i$ doctors, $53 \times i$ hospitals, and $775 \times i$ available positions, where $i \in \{1, 2, 3, 5, 10\}$) were used, and the grades obtained by the doctors were distributed in $[1,j]$ where $j \in \{5,15,25\}$.
The distribution of doctor grades was controlled using a ``skewedness'' parameter $x$ in the instance generator, which means that the most common doctor score is likely to occur $x$ times more than the least common. Higher values of $x$ hence result in longer ties in the master list, and therefore also in the hospitals' preference lists.
In our experiments we used the value $x=3$.

The constructed instances are called ``RDM-ML-$i$-$j$" in the following.
For each pair $(i,j)$, 30 instances were created, resulting in 450 instances in total.

Table \ref{table-7} compares the literature algorithm N1 and algorithm N8 when priorities are given to the $z_{jk}$ variables (its best configuration). In both cases, preprocessing was applied.

\begin{table}[ht]
	\scriptsize
	\begin{center}
    	\caption{Comparison of N1 and N8 for preprocessed HRT RDM-ML-$i$-$j$ instances} \label{table-7}
	\setlength{\tabcolsep}{0.2cm}
	\begin{tabular}{ccccrrrrrrrrrrrrrrr}
     	\toprule
        \multicolumn{2}{c}{\multirow{1}{*}{Instances}} &  \multicolumn{5}{c}{{N1}} & \multicolumn{5}{c}{{N8 + priorities on $z_{jk}$ variables}} \\
        \cmidrule(r){1-2} \cmidrule(r){3-7} \cmidrule(r){8-12}
        $j$ & $i$ & \#opt &  \makecell{time} & \makecell{nb. \\var.}  & \makecell{nb. \\cons.} & \makecell{nb. \\nzs.} & \#opt &  \makecell{time} & \makecell{nb. \\var.}  & \makecell{nb. \\cons.} & \makecell{nb. \\ nzs.} \\
        \cmidrule(r){1-2} \cmidrule(r){3-7} \cmidrule(r){8-12}
\multirow{5}{*}{25} &	1 &		30&	0.0&	 \num{834}&	 \num{1646}&	 \num{9869}&	30&	0.1&	 \num{2768}&	 \num{4436}&	 \num{9747}\\
&	2 &		30&	0.1&	 \num{1658}&	 \num{3282}&	 \num{19462}&	30&	0.1&	 \num{5528}&	 \num{8843}&	 \num{19420}\\
&	3 &		30&	0.1&	 \num{2464}&	 \num{4900}&	 \num{28656}&	30&	0.2&	 \num{8217}&	 \num{13145}&	 \num{28850}\\
&	5 &		30&	0.2&	 \num{4097}&	 \num{8157}&	 \num{47663}&	30&	0.3&	 \num{13688}&	 \num{21881}&	 \num{48011}\\
&	10 &		30&	0.5&	 \num{8191}&	 \num{16311}&	 \num{95056}&	30&	0.8&	 \num{27356}&	 \num{43738}&	 \num{95971}\\
			  \cmidrule(r){1-2} \cmidrule(r){3-7} \cmidrule(r){8-12}									
\multirow{5}{*}{15} &	1 &		30&	0.1&	 \num{913}&	 \num{1725}&	 \num{12230}&	30&	0.1&	 \num{2738}&	 \num{4565}&	 \num{10159}\\
&	2 &		30&	0.3&	 \num{1804}&	 \num{3428}&	 \num{23945}&	30&	0.2&	 \num{5443}&	 \num{9051}&	 \num{20117}\\
&	3 &		30&	0.5&	 \num{2712}&	 \num{5148}&	 \num{35939}&	30&	0.4&	 \num{8176}&	 \num{13600}&	 \num{30233}\\
&	5 &		30&	3.8&	 \num{4534}&	 \num{8594}&	 \num{60305}&	30&	1.4&	 \num{13647}&	 \num{22715}&	 \num{50516}\\
&	10 &		30&	10.7&	 \num{9000}&	 \num{17120}&	 \num{118923}&	30&	9.5&	 \num{27164}&	 \num{45164}&	 \num{100373}\\
		  \cmidrule(r){1-2} \cmidrule(r){3-7} \cmidrule(r){8-12}						
\multirow{5}{*}{5} &	1 &		29&	227.0&	 \num{1688}&	 \num{2500}&	 \num{49510}&	30&	1.1&	 \num{3868}&	 \num{7244}&	 \num{17052}\\
&	2 &		29&	270.5&	 \num{3336}&	 \num{4960}&	 \num{97526}&	30&	7.0&	 \num{7657}&	 \num{14329}&	 \num{33709}\\
&	3 &		22&	1326.5&	 \num{5098}&	 \num{7534}&	 \num{151203}&	30&	271.3&	 \num{11665}&	 \num{21861}&	 \num{51488}\\
&	5 &		17&	1929.3&	 \num{8297}&	 \num{12357}&	 \num{239442}&	28&	521.7&	 \num{19047}&	 \num{35641}&	 \num{83829}\\
&	10 &		6&	3034.2&	 \num{16963}&	 \num{25083}&	 \num{494286}&	21&	1734.3&	 \num{38807}&	 \num{72733}&	 \num{171301}\\
 \bottomrule
\end{tabular}
\end{center}
\end{table}
It appears that instances that have a master list are significantly easier than those that do not, as each method can solve at least one instance of each group, even for RDM-ML-10-$j$.
In addition, it seems that datasets allowing a larger range for the grades (RDM-ML-$i$-25 and RDM-ML-$i$-15), are easier as even the basic algorithm N1 can solve them all in seconds. Difficult master list instances have a very narrow range of grades (e.g., $j=5$), and many doctors and hospitals (e.g., $i=5$ or $10$).
On these instances, we can appreciate the benefits of the proposed improvements, as N1 can only solve 6 RDM-ML-10-5 instances, while N8 with priorities can solve 21 of them.
Overall, for the 90 RDM-ML-10-$j$ instances generated, N1 can solve 66, whereas N8 with priorities can solve 81, an increase of 23\%.

\subsection{Summary of the experiments}  \label{s4}

We empirically showed that, overall, we could solve significantly larger instances of MAX-WT-SMTI-GRP, MAX-SMTI, and MAX-HRT when compared to the existing methods.
However, we observed that each problem had its own peculiarities.

For our practical case of MAX-WT-SMTI-GRP (the Coram application), characterised by a medium number of agents and very long preference lists, it is of paramount importance to reduce the number of non-zero elements and, thus, it is crucial to use dummy variables.
Indeed, when preference lists are very long, stability constraints involve many variables and the size of the models increases quickly.
To a lesser extent, it is beneficial to decrease the number of constraints without deteriorating the continuous relaxation too much.
Thus, using constraint merging and double stability constraints is useful. In conclusion, configurations M4 and~M6 are the most suitable for the problem.

For our MAX-SMTI instances, characterised by a very large number of agents and shorter preference lists, it is vital to reduce the number of constraints and, thus, it is advised to use constraint merging.
Indeed, under these conditions, the models involve many stability constraints, which can be difficult to tackle by ILP solvers, even if they do not involve as many variables as they did for MAX-WT-SMTI-GRP.
To a lesser extent, it is beneficial to decrease the number of non-zero elements, but only when it is not at the expense of a significant increase in terms of variables and constraints.
Thus, using dummy variables is beneficial when the tie density is high and when the preference list length is not too small.
For this problem, configurations M3 and M4 are the most suitable.

For our MAX-HRT instances with no master list, characterised by a medium number of agents and short preference lists, it is important to help the solver reduce the gap between the lower and upper bounds and, thus, it is advised to use two sets of stability constraints.
Indeed, we observed that the solver struggled to solve some instances even when the size of the model was reasonable.
For this problem, configurations N8 and N12 are the most suitable.

The MAX-HRT instances with a master list that we tested, which aimed to mimic the real-world SFAS instances at a larger scale, did not present any sort of challenge when the grade range was reasonably large (i.e., at least 15 in our experiments). For instances with a very narrow grade range (i.e., 5 in our experiments), using N8 with preprocessing and the adequate priorities is advised.

Finally, we observed that preprocessing was useful for MAX-WT-SMTI-GRP, but not always useful for MAX-HRT (e.g., with configuration N8) or for MAX-SMTI.
In addition, we saw that using a warm start could help some algorithms (e.g., N1 for MAX-HRT), and slow down some others (e.g., N8 for MAX-HRT).
Finally, we also empirically observed that giving priorities to some specific sets of variables during the branching process of the ILP solver could be beneficial (e.g., giving priority to the $z_{jk}$ variables of N8 for MAX-HRT).

\section{Conclusion}\label{Con}
We described two algorithms for preprocessing instances of SMTI where ties occur on both sides.
This resulted in significant improvements when applied to models from the literature, solving an additional 15 (of 22 total) real-world MAX-WT-SMTI-GRP instances from Coram within one hour per instance.
We also introduced new ILP models for SMTI, first by using dummy variables to reduce the complexity of the constraint matrix, and then by merging stability constraints, and using double stability constraints.
Various combinations of techniques were demonstrated to improve the performance of our models, and together with the earlier preprocessing our new models solved all 22 Coram instances with a mean runtime of less than one minute.
Computational experiments on randomly generated instances also showed that our models could solve instances of SMTI with up to $\num{50000}$ agents per side.
The new ILP models were also extended to HRT, where we showed a performance improvement from 144 seconds to 3 seconds on average on real-world instances from SFAS.
We also showed that we could solve an additional $23\%$ of randomly generated instances with around \num{7500} doctors and hospital places when compared to state-of-the-art models.

\rev{In this paper we have not considered issues of strategy, however this direction is certainly worthy of further study in instances of SMTI-GRP and HRT.  In the context of the classical Hospitals / Residents problem, it is a well-known result that, with respect to the Resident-oriented Gale-Shapley algorithm \cite{GS62,GI89}, it is a dominant strategy for the doctors to tell the truth \cite{Rot85a}.  On the other hand, there is no mechanism that is stable and strategy-proof for hospitals \cite{Rot86}.  Relative to any mechanism in SMTI-GRP or HRT that is based on finding a maximum cardinality stable matching, it is not difficult to show that weights or preferences (respectively) could be falsified by doctors and/or hospitals in order to improve their outcomes relative to their true preferences (e.g., by declaring less desirable preferences as unacceptable).  We leave as future work the investigation of the existence of strategy-proof mechanisms for instances of SMTI-GRP and HRT that produce good approximations to maximum cardinality stable matchings.}

Further work \rev{also} includes extending our preprocessing algorithms for SMTI to the more general case of HRT.
This may lead to the exact solution of larger HRT instances than we considered in this paper.
It also remains open to extend our models to the extension of HRT where couples apply jointly to pairs of hospitals.

\section*{Acknowledgements}
\rev{We would like to thank the two anonymous reviewers for their valuable comments that have helped to improve the presentation of this paper.}  We would also like to thank Kevin Yong of Coram for the collaboration involving the pairing of children with families, and also Rob Irving for helpful advice regarding the usage of the ``skewedness'' parameter that was used when generating HRT instances with master lists.  This research was supported by the Engineering and Physical Science Research Council through grant EP/P029825/1 (first four authors) and grant EP/P028306/1 (fifth and sixth authors).

\bibliography{matching}
\bibliographystyle{plain}

\end{document}